\newcommand{\eqdef}{:=}
\newcommand{\reqdef}{=:}
\renewcommand{\vec}[1]{\mathbf{#1}}		
\newcommand{\rvec}[1]{\mathbbm{#1}} 		
\newcommand{\rmat}[1]{\mathbbm{#1}} 	
\newcommand{\E}{\mathsf{E}}		
\newcommand{\Var}{\mathsf{V}}			
\newcommand{\stdset}[1]{\mathbbmss{#1}}	
\newcommand{\set}[1]{\mathcal{#1}}		
\newcommand{\CN}{\mathcal{CN}}			
\newcommand{\herm}{\mathsf{H}}			
\newcommand{\refeq}[1]{(\ref{#1})}		%
\newtheorem{definition}{Definition}
\newtheorem{proposition}{Proposition}
\newtheorem{assumption}{Assumption}
\newtheorem{remark}{Remark}
\newcommand{\real}{\stdset{R}} 		
\newcommand{\Natural}{{\mathbb N}}
\begin{document}
\def\baselinestretch{.98}
\setlength{\belowdisplayskip}{0.9pt}
\setlength{\belowdisplayshortskip}{0.9pt}
\setlength{\abovedisplayskip}{0.9pt}
\setlength{\abovedisplayshortskip}{0.9pt}

\title{Two-timescale joint power control and beamforming design with applications to cell-free massive MIMO}

\author{Lorenzo Miretti~\IEEEmembership{Member,~IEEE}, Renato~L.~G. Cavalcante~\IEEEmembership{Member,~IEEE},\\ Sławomir Sta\'nczak~\IEEEmembership{Senior Member,~IEEE}
\thanks{This work was supported by the Federal Ministry of Education and Research of Germany in the programme of “Souverän. Digital. Vernetzt.” Joint projects 6G-ANNA and 6G-RIC, project identification numbers: 16KISK087, 16KISK020K, 16KISK030.

L.~Miretti was with the Fraunhofer Institute for Telecommunications Heinrich-Hertz-Institut HHI, 10587 Berlin, Germany, and also with the Technische Universität Berlin, 10587 Berlin, Germany. He is now with Ericsson Research, Germany (email: lorenzo.miretti@ericsson.com).

R.~L.~G. Cavalcante, and S.~Sta\'nczak are with the Fraunhofer Institute for Telecommunications Heinrich-Hertz-Institut HHI, Berlin 10587, Germany (email: \{cavalcante, stanczak\}@hhi.fraunhofer.de). S.~Sta\'nczak is also with Technische Universität Berlin, Berlin 10587, Germany. }
}

\maketitle


\IEEEpubid{\begin{minipage}{\textwidth}\ \\[12pt] \centering
© 2025 IEEE. Personal use of this material is permitted. Permission from IEEE must be obtained for all other uses, in any current or future media, including reprinting/republishing this material for advertising or promotional purposes, creating new collective works, for resale or redistribution to servers or lists, or reuse of any copyrighted  component of this work in other works.
\end{minipage}}

\begin{abstract} In this study we derive novel optimal algorithms for joint power control and beamforming design in modern large-scale MIMO systems, such as those based on the cell-free massive MIMO and XL-MIMO concepts. In particular, motivated by the need for scalable system architectures, we formulate and solve nontrivial two-timescale extensions of the classical uplink power minimization and max-min fair resource allocation problems. In our formulations, we let the beamformers be functions mapping partial instantaneous channel state information (CSI) to beamforming weights, and we jointly optimize these \textit{functions} and the power control coefficients based on long-term statistical CSI. This long-term approach mitigates the severe scalability issues of competing short-term iterative algorithms in the literature, where a central controller endowed with global instantaneous CSI must solve a complex optimization problem for every channel realization, hence imposing very demanding requirements in terms of computational complexity and signaling overhead. Moreover, our approach outperforms the available long-term approaches, which do not jointly optimize powers and beamformers. The obtained optimal long-term algorithms are then illustrated and compared against existing short-term and long-term algorithms via numerical simulations in a cell-free massive MIMO setup with different levels of cooperation. 
\end{abstract}

\begin{IEEEkeywords}
	max-min fairness, QoS constraints, fixed-point methods, team MMSE, distributed beamforming
\end{IEEEkeywords}

\section{Introduction}
\IEEEpubidadjcol
There is a growing consensus in both industry and academia that the physical layer of next-generation mobile communication networks will primarily rely on sophisticated multi-user multiple-input-multiple-output (MU-MIMO) systems for spectrally efficient multiple access in the lower (sub-6 GHz) and upper mid-band (7-24 GHz). The vast majority of these envisioned systems are rooted in the now established massive MIMO paradigm \cite{marzetta2016fundamentals,massivemimobook}, and, in particular, in its most ambitious embodiments such as those based on the concepts of cell-free massive MIMO \cite{ngo2017cell,demir2021} or extremely large-scale MIMO (XL-MIMO) \cite{debbah2023xl}. A core aspect of the above technologies is their potential to perform precise beamforming (also called beam focusing, in the context of near-field communications) and to efficiently allocate resources such as time, frequency, and power to guarantee unprecedented quality of service to all users in the network. Specifically, by deploying a large number of infrastructure antennas across the service area, these technologies are able to effectively multiplex many users in the same time-frequency resource using standard single-user channel codes, linear array processing (i.e., beamforming), and appropriate power control mechanisms \cite{marzetta2016fundamentals,massivemimobook, demir2021}. 

\begin{figure*}[!ht]
		\centering
		\subfloat[]{\includegraphics[width=0.67\columnwidth]{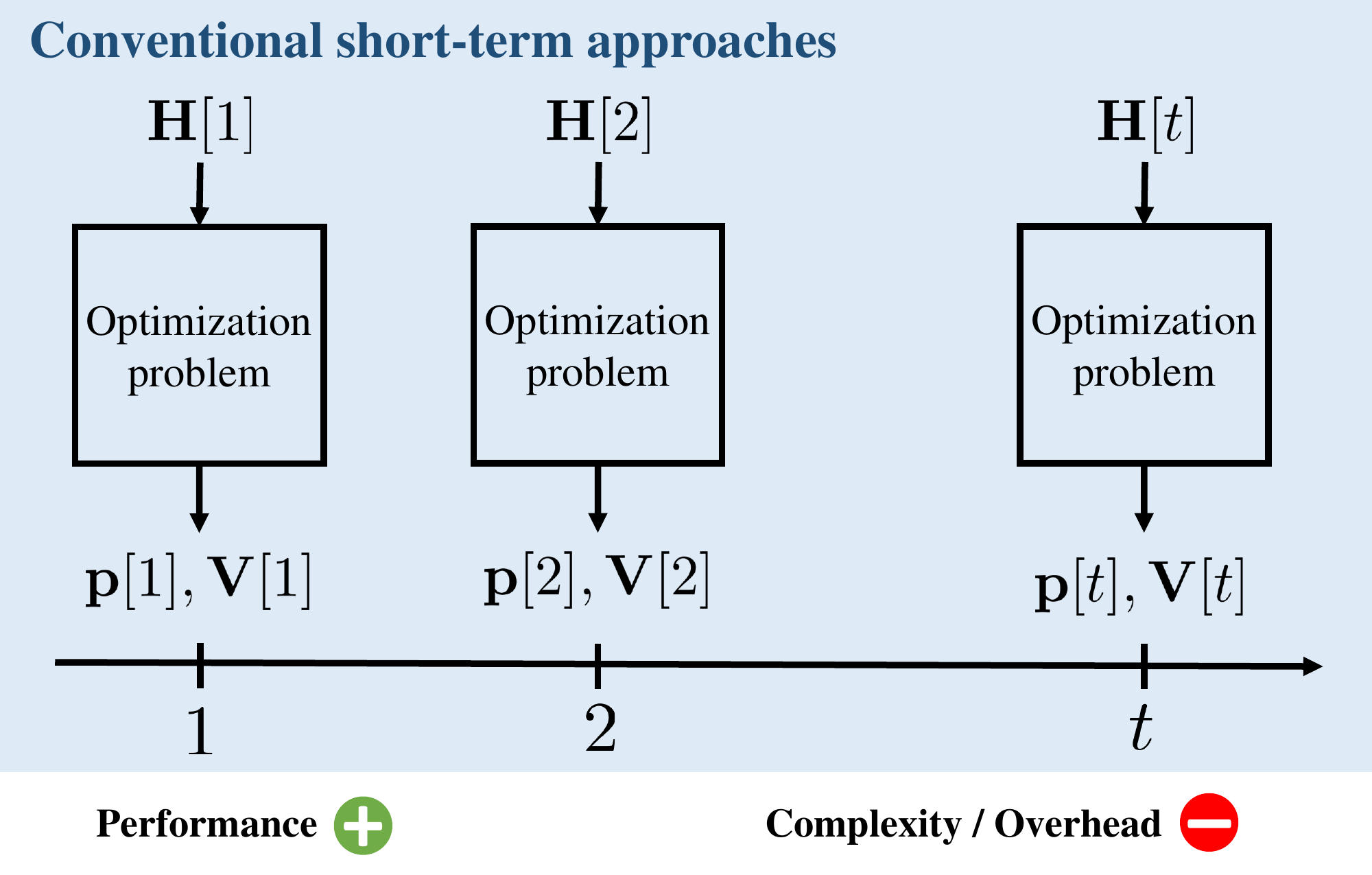}
			\label{fig:short}}
		\subfloat[]{\includegraphics[width=0.67\columnwidth]{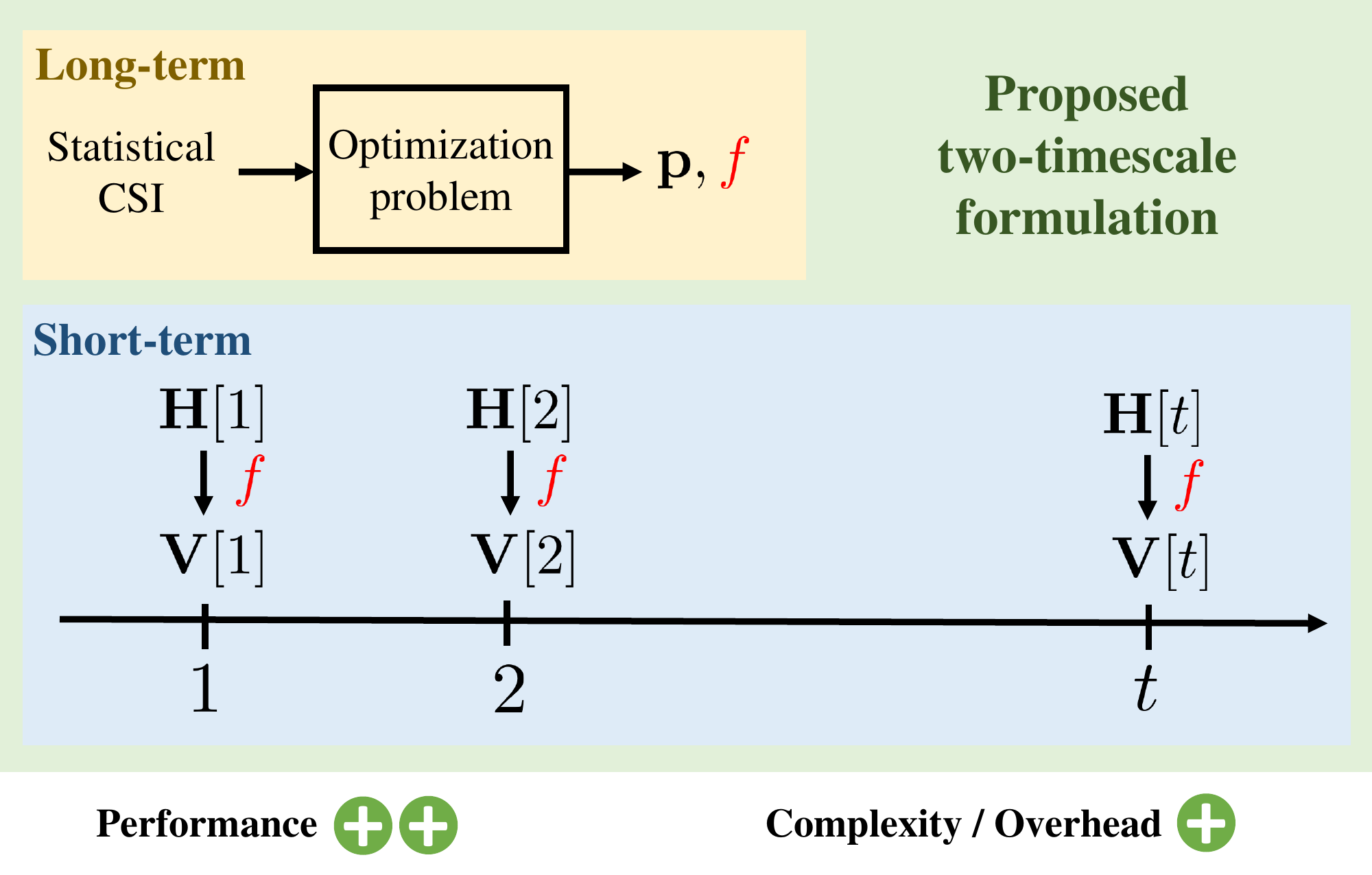}
			\label{fig:long}}
			\subfloat[]{\includegraphics[width=0.67\columnwidth]{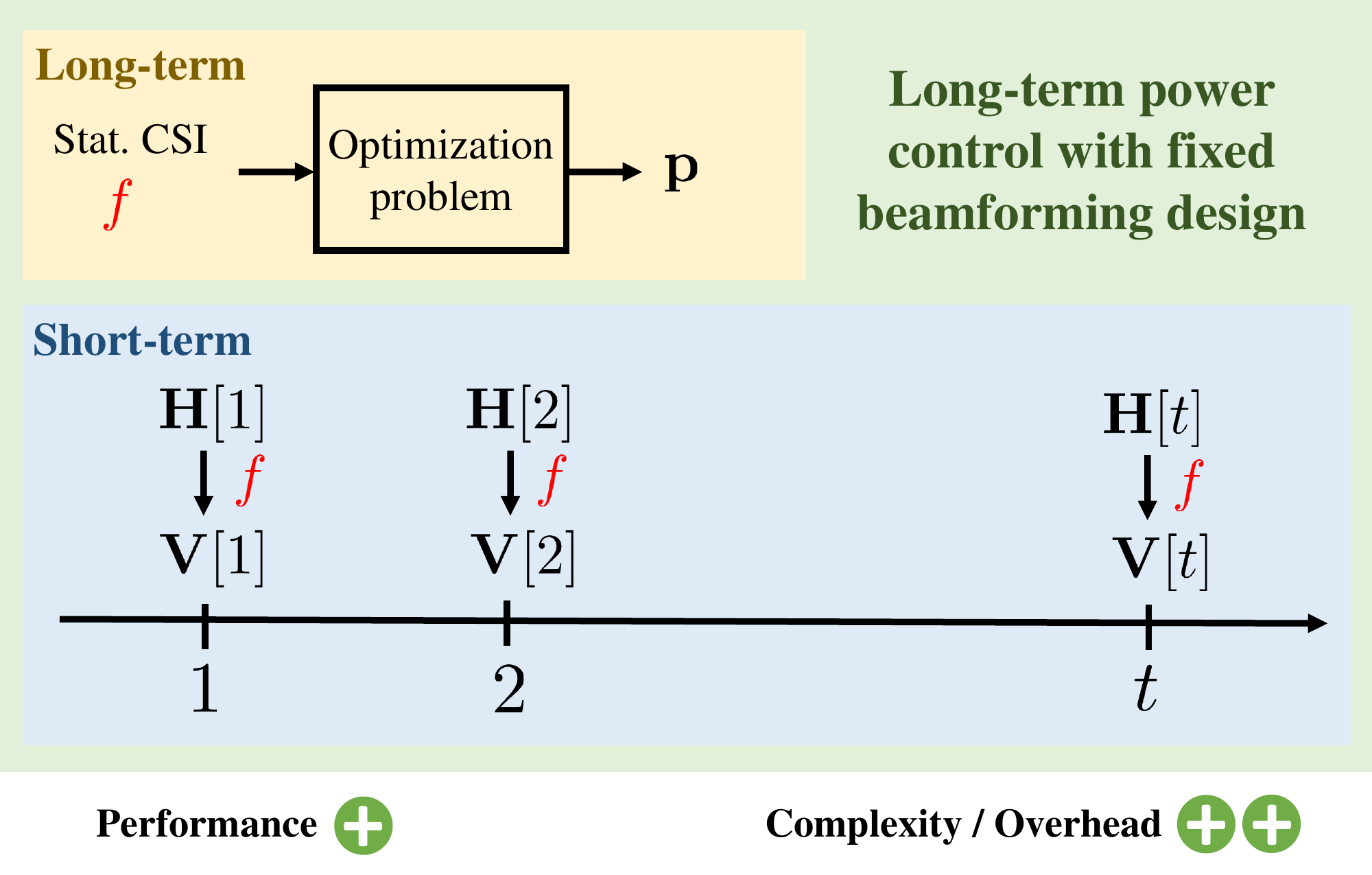}
			\label{fig:power}}
		\caption{Illustration of the differences between the proposed and existing approaches for the particular case of perfect CSI and max-min SINR criterion. In Fig.~(a), a complex optimization problem is solved for every $t$th channel realization $\vec{H}[t]$ to find jointly optimal instantaneous powers $\vec{p}[t]$ and beamformers $\vec{V}[t]=(\vec{v}_1[t],\ldots, \vec{v}_K[t])$. In contrast, in Fig.~(b), a single optimization problem is solved to find jointly optimal long-term powers $\vec{p}$ and beamforming function $f$ based on channel statistics. The obtained power vectors are kept fixed for all channel realizations, while the beamforming weights evolve over each channel realization as $\vec{V}[t]=f(\vec{H}[t])$. Fig.~(c) depicts a typical long-term power control approach in the literature, where the function $f$ is fixed \textit{a priori} (e.g., to maximum-ratio combining or zero-forcing), and it is not jointly optimized with $\vec{p}$. We note that, although introduced here for illustration purposes, in this work we do not use the functional notation $f$ and the channel realization index $t$, but we use a completely equivalent stochastic formulation.}
\label{fig:intro}
\end{figure*}

To capitalize on the full potential of the above technologies, the beamformers and the power control coefficients should be jointly optimized. However, the conventional joint optimization approaches borrowed from the classical MU-MIMO literature are known to be essentially inapplicable to modern large-scale MIMO systems, due to their severe scalability issues in terms of computational complexity and signaling overhead, both over the air interface and over the fronthaul or array interconnections. Therefore, most of the research on power control and beamforming design for cellular and cell-free massive MIMO systems is devoted to addressing the scalability issues of the aforementioned conventional approaches \cite{marzetta2016fundamentals,massivemimobook,ngo2017cell,nayebi2017precoding,bashar2019maxmin,bjornson2019making,du2021cellfree,demir2021,interdonato2021enhanced}. Nevertheless, despite the remarkable progress, all the available solutions essentially renounce jointly optimizing the beamformers and the power control coefficients, hence they provide suboptimal designs that can result in non-negligible performance loss, as verified later via numerical simulations.

In this study, we provide a novel optimal approach to joint power control and beamforming design that overcomes the main limitations of the conventional approaches from the classical MU-MIMO literature, and of the available approaches from the (cell-free) massive MIMO literature, as detailed next.
\IEEEpubidadjcol

\subsection{Conventional short-term approaches}
To jointly optimize the beamformers and the power control coefficients, two popular criteria are the sum power minimization subject to signal-to-interference-to-noise (SINR) constraints, and the maximization of the minimum weighted SINR subject to power constraints (sometimes also referred to as the SINR balancing problem) \cite{rashid1998joint, schubert2004solution, wiesel2006linear, yu2007transmitter,cai2011maxmin}. By focusing on the uplink of a system with $K$ single-antenna users and a base station equipped with $M$ antennas, these two optimization problems are typically  stated as 
\begin{equation}\label{eq:QoS_det}
		\underset{\substack{\vec{p},\vec{v}_1,\ldots,\vec{v}_K}}{\text{minimize}}
		\; \sum_{k\in\set{K}}p_k 
		\text{ s.t. } (\forall k\in\mathcal{K})~\mathsf{SINR}_k^{\mathsf{inst}}(\vec{H},\vec{v}_k,\vec{p}) \geq \gamma_k, 
\end{equation}
and
\begin{equation}\label{eq:maxmin_det}
		\underset{\substack{\vec{p},\vec{v}_1,\ldots,\vec{v}_K}}{\text{maximize}}
		\; \min_{k\in \set{K}} \frac{\mathsf{SINR}_k^{\mathsf{inst}}(\vec{H},\vec{v}_k,\vec{p})}{\gamma_k}
		\text{ s.t. } (\forall k\in\mathcal{K})~p_k \leq P, 
\end{equation}
where $\mathcal{K}=\{1,\ldots,K\}$ is the set of user indexes, $(\gamma_1,\ldots,\gamma_K)\in \stdset{R}_{++}^K$ and $P\in \stdset{R}_+$ are problem parameters, and $\mathsf{SINR}_k^{\mathsf{inst}}$ denotes the instantaneous SINR of user $k$ for a given MU-MIMO channel matrix $\vec{H}\in \stdset{C}^{K\times M}$, power control coefficients $\vec{p}=(p_1,\ldots,p_K)\in \stdset{R}_+^K$, and receive beamformers $\vec{v}_1,\ldots,\vec{v}_K \in \stdset{C}^M$. The classical MU-MIMO literature provides many efficient and provably convergent algorithms for solving not only \eqref{eq:QoS_det} and \eqref{eq:maxmin_det}, but also many variants covering, for example, downlink transmission. Notably, these problems can be solved via fixed-point methods based on an axiomatic uplink power control framework introduced in \cite{yates95} for  sum power minimization problems, and extended in \cite{nuzman07} to weighted max-min problems. Furthermore, although developed for the uplink, this framework can also be applied to related downlink problems by means of uplink-downlink duality arguments \cite{schubert2004solution,yu2007transmitter,cai2011maxmin}. Embracing this abstract framework 
offers many advantages. In particular, it not only gives valuable insights into the network behavior at a high-level \cite{schubert2011interference,renato2016,renato2019}, but it also gives access to powerful numerical techniques for addressing more involved variants of \eqref{eq:QoS_det} and \eqref{eq:maxmin_det} without the need for ad-hoc proofs \cite{cai2011maxmin,cai2012comp,schubert2019multi,liu2019association}. 

In addition to sum power minimization and weighted max-min problems, another popular optimization criterion is the maximization of the weighted sum-rate subject to power constraints. However, solving weighted sum-rate maximization problems is notoriously challenging. In particular, in canonical MU-MIMO settings where interference is treated as noise, the sum-rate maximization problem is NP-hard in general \cite{luo2008dynamic} (taking aside restricted scenarios where specific symmetries or other channel characteristics simplify the problem). Therefore, either highly complex global optimization methods \cite{utschick2012monotonic,emil2013resource} or efficient suboptimal algorithms \cite{christensen2008wmmse,shen2018fractional} are typically considered in the literature. Since in this study we focus on efficient optimal designs, we consider only sum power minimization and weighted max-min problems, and we leave weighted sum-rate maximization problems for future work.

\subsection{Long-term joint power control and beamforming design}
The above conventional approaches, which we refer to as the \textit{short-term} approaches, suffer from the following  issues:
\begin{itemize}
\item First, \eqref{eq:QoS_det} and \eqref{eq:maxmin_det} require running an iterative algorithm \textit{for every  realization} of $\vec{H}$, which imposes demanding requirements in terms of computational complexity.
\item Second, \eqref{eq:QoS_det} and \eqref{eq:maxmin_det} require collecting the problem input $\vec{H}$ and distributing the solution $(\vec{p},\vec{v}_1,\ldots,\vec{v}_K)$ across all the appropriate network entities \textit{for every channel realization}, which results in high signaling overhead.
\end{itemize}
Importantly, note that all the above operations must be completed under strict latency constraints, since the optimized powers $\vec{p}$ must be conveyed to the users within each channel coherence block (in time and frequency). The main working assumption of this study, which is also ubiquitous in the massive MIMO literature (see, e.g., \cite{marzetta2016fundamentals,massivemimobook,ngo2017cell,nayebi2017precoding,bashar2019maxmin,bjornson2019making,du2021cellfree,demir2021,interdonato2021enhanced}  among others), is that the above challenges make short-term approaches impractical for large-scale MIMO systems, which calls for alternative approaches. 

Leaving heuristics aside, we can circumvent the above challenges by considering optimization problems involving ergodic rate expressions of the type 
\begin{equation}\label{eq:R}
\E\left[\log_2\left(1+\mathsf{SINR}_k^{\mathsf{inst}}(\rvec{H},\rvec{v}_k,\vec{p}\right)\right],
\end{equation}
where $\rvec{H}$ is a random channel matrix, $\vec{p}$ is a deterministic power vector, and where the beamformers $\rvec{v}_k$ are \textit{functions} of the available instantaneous channel state information (CSI). This two-timescale formulation, which we refer to as the \textit{long-term} approach, has the  advantage that the power vector $\vec{p}$ can be optimized only sporadically \textit{for many channel realizations} based on relatively slowly time-varying channel statistics. The same advantage applies to beamforming design, i.e., the optimization of the \emph{functions} mapping CSI to instantaneous beamforming coefficients. In addition, the long-term approach facilitates the introduction of non-trivial beamforming constraints modeling limited CSI sharing in distributed architectures, using the notion of information constraints proposed in \cite{miretti2021team}. Introducing these information constraints is particularly useful for optimizing cell-free networks in scenarios where beamformers must be computed locally at each access point (AP) based on local or partially shared CSI \cite{ngo2017cell,demir2021}. In particular, note that common AP clustering techniques \cite{buzzi2020,demir2021} alone are not sufficient to model these scenarios. However, the main drawback of the long-term approach based on \eqref{eq:R} is that it typically leads to intractable functional optimization problems. In fact, as of today, no optimal method for long-term joint power control and beamforming design is available in the literature. In this work, we address this limitation and present the first optimal yet tractable long-term method that overcomes the above challenges of short-term approaches.

\subsection{Main contribution and related long-term approaches}
Building on the related literature on long-term power control in (cell-free) massive MIMO systems \cite{marzetta2016fundamentals,massivemimobook,ngo2017cell,nayebi2017precoding,bashar2019maxmin,bjornson2019making,du2021cellfree,demir2021,interdonato2021enhanced}, the main contribution of this study is to show that, by replacing \eqref{eq:R} with the so-called \textit{use-and-then-forget} (UatF) bound on the achievable uplink ergodic rates \cite{marzetta2016fundamentals,massivemimobook}, we are able to formulate tractable two-timescale variants of \eqref{eq:QoS_det} and \eqref{eq:maxmin_det} to consider, for the first time, long-term power control and (possibly distributed) beamforming design. Specifically, we show that the proposed problem formulations enable us to derive globally optimal fixed-point methods based on the same axiomatic power control framework in \cite{yates95,nuzman07} used for solving \eqref{eq:QoS_det} and \eqref{eq:maxmin_det}. The key step of our derivation relies on a previously unexplored relation between the maximization of the UatF bound and a minimum mean-square error (MMSE) problem under arbitrary information constraints \cite{miretti2021team}. We remark that a similar relation is well-known for ergodic rate expressions of the type in \eqref{eq:R} \cite{massivemimobook}, but it does not lead to tractable joint long-term power control and beamforming design problems, and it is only valid for centralized beamforming  \cite{demir2021}.  

To relate the main contribution of the present study with existing long-term approaches, we recall that iterative algorithms for long-term power control under fixed beamforming design based on the UatF bound (or its related downlink version) have been extensively studied in both the cellular and cell-free massive MIMO literature (see, e.g., \cite{marzetta2016fundamentals,massivemimobook,ngo2017cell,nayebi2017precoding,du2021cellfree,bjornson2019making,demir2021,interdonato2021enhanced}). In particular, a variant of \eqref{eq:maxmin_det} based on the UatF bound, where only $\vec{p}$ is optimized, can be solved using the fixed-point method in \cite[Algorithm~7.1]{demir2021}. Furthermore, the above long-term power control problem has been generalized in \cite{bashar2019maxmin} to long-term power control and large-scale fading decoding (LSFD) design. Specifically, \cite{bashar2019maxmin} provides an iterative algorithm based on the UatF bound that not only optimizes $\vec{p}$, but also a set of statistical combining coefficients $\vec{a}_1,\ldots,\vec{a}_K\in \stdset{C}^M$ that are used to form enhanced beamfomers $\rvec{v}_k = \mathrm{diag}(\vec{a}_k)\rvec{v}_k^{\mathsf{fix}}$ from a fixed beamforming design $\rvec{v}_k^{\mathsf{fix}}$ (e.g., maximum-ratio combining as in \cite{bashar2019maxmin}). Our method generalizes (and, as we will see, outperforms) the long-term methods in \cite[Algorithm~7.1]{demir2021} and \cite{bashar2019maxmin} to long-term power control and beamforming design, i.e., to the joint optimization of $\vec{p}$ and of the functions $\rvec{v}_1,\ldots,\rvec{v}_K$ mapping instantaneous CSI to beamforming coefficients. In addition, we remark that a method for optimal beamforming design based on the UatF bound, which applies to centralized and also to distributed beamforming architectures, is provided in \cite{miretti2021team}, but the optimization of $\vec{p}$ is left as an open problem. Hence, our method also extends \cite{miretti2021team} by including optimal power control. In brief, the existing long-term optimization methods in the literature either perform optimal power control with suboptimal beamforming design, or optimize the beamformers under suboptimal power control. In contrast, our proposed technique solves the joint problem.

\subsection{Summary of contributions and organization of the paper}
In Section~\ref{sec:application} we formally define and motivate the considered long-term power control and beamforming design problems for large-scale MIMO systems. In  Section~\ref{sec:solution} we derive the proposed optimal algorithms for solving the above problems, by establishing a connection between the axiomatic power control framework in \cite{yates95,nuzman07} and the beamforming design method in \cite{miretti2021team}. For completeness, we provide in the appendix a concise yet up-to-date introduction to the framework in  \cite{yates95,nuzman07}, which uncovers overlooked connections within the existing literature. To illustrate the potential of the proposed approach, in Section~\ref{sec:sim} we conduct an extensive numerical study on canonical user-centric cell-free massive MIMO networks. We investigate many approaches for power control and beamforming design, and, as interesting byproducts of our main contribution, we obtain the following additional results:\footnote{A preliminary version of this study focusing on max-min fairness and distributed beamforming was presented in \cite{miretti2022joint}. This study extends \cite{miretti2022joint} by including the minimum sum power criterion and other beamforming models. Furthermore, it significantly improves the comparison against competing approaches, and it presents a more complete derivation and analysis of the algorithms based on an up-to-date review of fixed-point methods. In addition, it provides novel theoretical and experimental insights on cell-free networks.}
\begin{itemize}
\item We present the first optimal comparison of small-cells networks and cell-free massive MIMO networks regarding their ability to deliver uniform quality of service to all users. Specifically, our study takes into account optimal joint power control and potentially distributed beamforming design. Notably, we provide a straightforward explanation for performance differences using direct and intuitive arguments. Our approach departs from previous studies such as those in \cite{ngo2017cell,bjornson2019making,demir2021}, which, relying on suboptimal schemes, frequently yield results that are hard to interpret and may even lead to different conclusions. 
\item We contribute with novel insights to the ongoing debate on the appropriate ergodic rate bound for studying large-scale MIMO networks \cite{massivemimobook,caire2018ergodic,demir2021,gottsch2023subspace}. The UatF bound underestimates the true system performance \cite{caire2018ergodic}, but we give numerical evidence showing that it can serve as an excellent proxy for system optimization when dealing with intractable ergodic rate expressions such as those in \eqref{eq:R}. For the centralized beamforming case, this evidence is further strengthened by noticing that the known optimal beamforming design derived under an expression similar to \eqref{eq:R} \cite{massivemimobook,demir2021} coincides with the optimal design derived under the UatF bound.
\item We show that long-term power control and beamforming design may provide non-negligible performance gains compared to power control with fixed beamformers in both centralized and distributed cell-free massive MIMO networks. Furthermore, we show that competing short-term centralized methods may not only be impractical, but they also suffer from a non-negligible performance loss. Moreover, we confirm that beamfoming design is the main bottleneck in cell-free networks \cite{demir2021}.
\end{itemize}

We emphasize that this study focuses on comparing the solution to different problem \textit{formulations}, rather than different \textit{numerical methods} for a given problem formulation. In particular, our comparison does not cover different suboptimal methods, such as learning-based methods, that can be used to find less complex solutions to a given optimization problem \cite{sun2018learning,dandrea2019uplink,xia2020deep,garcia2024flexible,vahapoglu2024deep}. For instance, while a learning-based technique for solving \eqref{eq:QoS_det} and \eqref{eq:maxmin_det}, similar to those in \cite{xia2020deep,vahapoglu2024deep}, can be less complex than the algorithms proposed in \cite{rashid1998joint, schubert2004solution}, it would still face the same overhead challenges related to instantaneous uplink power control and centralized beamforming. Similarly, learning-based methods for long-term power control, such as those in \cite{dandrea2019uplink,garcia2024flexible}, do not overcome the limitations of suboptimal beamforming design. Note that, for the optimization problems we pose, learning-based methods can at best match the performance of the optimal approaches we derive. Nevertheless, we point out that exploring more efficient suboptimal techniques for the proposed two-timescale formulations represents an intriguing direction for future research.

\textit{Notation and mathematical preliminaries:}
Boldface lowercase letters, $\vec{x}$, denote column vectors, and boldface uppercase letters, $\vec{X}$, denote matrices. Random vectors and matrices are typographically distinguished from their deterministic counterparts using the blackboard typeface, $\rvec{x}$ and $\rvec{X}$. Calligraphic uppercase letters, $\set{X}$, denote sets, and the balckboard sans-serif symbols $\stdset{N}$, $\stdset{R}$, $\stdset{C}$ are reserved for the sets of natural, real, and complex numbers, respectively. The sets of nonnegative and positive reals are denoted by, respectively, $\real_+$ and $\real_{++}$. We denote by $x_k$ the $k$th coordinate of a vector $\vec{x}\in\real^K$. The $k$th column of the $K$-dimensional identity matrix $\vec{I}_K$ is denoted by $\vec{e}_k$.  The expectation and variance operators are denoted by $\E[\cdot]$ and $\Var(\cdot)$, respectively. Let $(\Omega,\Sigma,\mathbb{P})$ be a probability space. We denote by $\set{H}^K$ the set of complex valued random vectors, i.e., $K$-tuples of $\Sigma$-measurable functions $\Omega \to \stdset{C}$ satisfying $(\forall \rvec{x}\in \set{H}^K)$ $\E[\|\rvec{x}\|_2^2]<\infty$. Together with the standard operations of addition and real scalar multiplication, we recall that $\set{H}^K$ is a real vector space. Inequalities involving vectors should be understood coordinate-wise; i.e.,  $(\forall \vec{x}\in\real_+^K)(\forall \vec{y}\in\real_+^K)(\forall k\in\{1,\ldots,K\})~x_k\le y_k \Leftrightarrow \vec{x}\le\vec{y} $. A norm $\|\cdot\|$ in $\real^K$ is said to be monotone if $(\forall \vec{x}\in\real_+^K)(\forall \vec{y}\in\real_+^K)~ \vec{x}\le\vec{y}\Rightarrow \|\vec{x}\|\le\|\vec{y}\|$. We say that a sequence $(\vec{x}_n)_{n\in\Natural}\subset\real^K$ converges  to $\vec{x}^\star$ if $\lim_{n\to\infty}\|\vec{x}_n-\vec{x}^\star\|=0$ for some (and hence for every) norm $\|\cdot\|$ in $\real^K$. Given a mapping $T:\set{X}\to \set{Y}$, where $\set{Y} \subseteq \set{X}$, we denote by $\mathrm{Fix}(T):=\{ x \in \set{X}~|~T(x)=x\}$ its set of fixed points.  

\textit{Reproducible Research}: the simulation code is available at \url{https://github.com/lorenzomiretti/powercontrol}.

\section{Problem statement}
\label{sec:application}
We consider the uplink of a large-scale MIMO system composed by $K$ single-antenna users indexed by $\mathcal{K}:=\{1,\ldots,K\}$, and a radio access infrastructure equipped with $M$ antennas indexed by $\set{M}:=\{1,\ldots,M\}$. The $M$ infrastructure antennas may be split among many geographically distributed access points equipped with a relatively low number of antennas as in the cell-free massive MIMO concept \cite{ngo2017cell,demir2021}, or colocated in a single array as in the XL-MIMO concept \cite{debbah2023xl}. By assuming for each time-frequency resource a standard synchronous
narrowband MIMO channel model governed by a stationary ergodic fading process, and simple transmission techniques based on linear receiver processing and on treating interference as noise, we focus on simultaneously achievable ergodic rates in the classical Shannon sense, given by the popular \emph{use-and-then-forget} (UatF) inner bound on the ergodic capacity region \cite{marzetta2016fundamentals}. In more detail, we define the uplink rates achieved by each user $k \in \set{K}$ for a given power control policy and beamforming design as
\begin{equation}\label{eq:uatf}
	R_k(\rvec{v}_k,\vec{p}) \eqdef \log_2(1+\mathsf{SINR}_k(\rvec{v}_k,\vec{p})),
\end{equation}
\begin{equation*}
	\mathsf{SINR}_k(\rvec{v}_k,\vec{p}) \eqdef \resizebox{0.69\linewidth}{!}{$\dfrac{p_k|\E[\rvec{h}_k^\herm\rvec{v}_k]|^2}{p_k\Var(\rvec{h}_k^\herm\rvec{v}_k)+\underset{j\neq k}{\sum} p_j\E[|\rvec{h}_j^\herm\rvec{v}_k|^2]+\E[\|\rvec{v}_k\|_2^2]}$},
\end{equation*}
where $\vec{p} \eqdef (p_1,\ldots,p_K) \in \stdset{R}_{+}^K$ is a deterministic vector of transmit powers, $\rvec{h}_k \in\set{H}^M$ is a random channel vector modeling the fading state between user $k$ and all $M$ infrastructure antennas, and $\rvec{v}_k \in \set{H}^M$ models the beamforming vector which is applied by the infrastructure to process the received signals of potentially all $M$ antennas to obtain a soft estimate of the transmit signal of user~$k$. We stress that $\rvec{v}_k$ is generally a function of the instantaneous CSI realizations, and hence it is denoted as a random vector. However, $R_k(\rvec{v}_k,\vec{p})$ is not a random variable because of the expectation operator, which is taken with respect to all involved random variables (as in every other part of this manuscript). For convenience, we also define the global channel matrix $\rvec{H}\eqdef[\rvec{h}_1,\ldots,\rvec{h}_K]$.

\begin{remark}
\label{rem:bounds} 
The rate expression given by \eqref{eq:uatf} assumes that CSI is neglected in the decoding phase. Alternative rate expressions of the type in \eqref{eq:R} with various forms of CSI at the decoder are often considered in the literature (see \cite{massivemimobook,caire2018ergodic,demir2021,gottsch2023subspace} and references therein). These expressions may provide less pessimistic capacity approximations, but system optimization often becomes challenging. In contrast, in this study we show that \eqref{eq:uatf} leads to tractable problem formulations. In addition, in Section~\ref{sec:sim} we also give numerical evidence showing that \eqref{eq:uatf} can serve as an excellent proxy for optimization according to intractable ergodic rate expressions such as those in \eqref{eq:R}.
\end{remark}

The classical MU-MIMO paradigm typically assumes the capability of acquiring high-quality measurements of $\rvec{H}$ by means of pilot-based channel estimation techniques, such as those described in \cite{massivemimobook}, to jointly process the signals of all antennas in real-time. However, very large-scale systems are often impaired by practical limitations in processing power and in the interconnections among antennas and processing units, which makes the above paradigm highly impractical. Therefore, following the modern literature on large-scale MIMO systems \cite{ngo2017cell,demir2021,debbah2023xl}, in this work we assume that the system operates under practical constraints that limit the CSI acquisition and joint processing burden, as described in the next sections.

\subsection{Long-term joint power control and beamforming design}
Based on the above model,  our objective is to solve the following two optimization problems:
\begin{equation}\label{eq:QoS}
	\begin{aligned}
		\underset{\substack{\vec{p}\in \stdset{R}_{+}^K \\ \rvec{v}_1,\ldots,\rvec{v}_K \in \set{H}^{M} }}{\text{minimize}}
		& \quad  \|\vec{p}\|_1 \\
		\text{subject to} & \quad (\forall k \in \set{K})~\mathsf{SINR}(\rvec{v}_k,\vec{p}) \geq \gamma_k\\
		& \quad  (\forall k \in \set{K})~\rvec{v}_k \in \set{V}_k\\
		& \quad  (\forall k \in \set{K})~\E[\|\rvec{v}_k\|_2^2] \neq 0; 
	\end{aligned}
\end{equation}
and
\begin{equation}\label{eq:maxmin}
	\begin{aligned}
		\underset{\substack{\vec{p}\in \stdset{R}_{+}^K \\ \rvec{v}_1,\ldots,\rvec{v}_K \in \set{H}^{M} }}{\text{maximize}}
		& \quad  \min_{k\in\set{K}} \gamma_k^{-1} \mathsf{SINR}_k(\rvec{v}_k,\vec{p}) \\
		\text{subject to} & \quad \|\vec{p}\|_{\infty}\leq P \\
		& \quad  (\forall k \in \set{K})~\rvec{v}_k \in \set{V}_k\\
		& \quad  (\forall k \in \set{K})~\E[\|\rvec{v}_k\|_2^2] \neq 0,
	\end{aligned}
\end{equation}
where $(\gamma_1,\ldots,\gamma_K)\in \stdset{R}_{++}^K$ is a given vector of positive coefficients denoting either minimum SINR requirements or user weights, $P \in \stdset{R}_{++}$ is a given power budget, and the sets $\set{V}_1,\ldots,\set{V}_K \subseteq \set{H}^M$ are given \textit{information constraints} (formally defined in Section~\ref{ssec:info}) that are imposed to limit the joint processing and information sharing burden, thus enabling scalable designs for large-scale MIMO systems.  For example, as we will see, the set $\set{V}_k$ can be used to limit the dependency of the entries of $\rvec{v}_k$ to portions of the CSI available at the infrastructure side, and also the number of antennas that are serving user $k$. Note that letting $\set{V}_k = \set{H}^M$ (i.e., omitting the information constraint) allows $\rvec{v}_k$ to be essentially any function of the global channel $\rvec{H}$, which is not practical. However, we remark that a solution to \eqref{eq:QoS} or \eqref{eq:maxmin} is generally not known even by omitting the information constraints.

\begin{remark}\label{rem:competing} The long-term power control algorithm in \cite[Algorithm~7.1]{demir2021} solves a particular case of \eqref{eq:maxmin} recovered by letting $\set{V}_k = \{\rvec{v}_k^{\mathsf{fix}}\}$ for some fixed $\rvec{v}^{\mathsf{fix}}_k\in \set{H}^M$, i.e., by restricting the set of feasible beamformers to a singleton (e.g., $\set{V}_k = \{\rvec{h}_k\}$ for the case of maximum-ratio combining with perfect CSI). Similarly, the long-term joint power control and LSFD design algorithm in \cite{bashar2019maxmin} solves a particular case of \eqref{eq:maxmin} recovered by letting $\set{V}_k = \{\mathrm{diag}(\vec{a}_k)\rvec{v}_k^{\mathsf{fix}}~|~\vec{a}_k \in \stdset{C}^M\}$ for some $\rvec{v}_k^{\mathsf{fix}}\in \set{H}^M$. Additional comparisons are given in Section~\ref{sec:sim}.
\end{remark}

The proposed problems share the following features:
\begin{itemize}
	\item The power control vector $\vec{p}$ is kept fixed for the whole duration of the transmission, i.e., it does not depend on the small-scale fading realizations, and it is optimized based on channel statistics. 
	\item The beamformers $(\rvec{v}_1,\ldots,\rvec{v}_K)$ adapt to small-scale fading based on the CSI specified by the information constraints. However, their design, i.e., the function mapping CSI realizations to the corresponding beamforming coefficients, is optimized based on channel statistics as for the power control vector. Note that this leads to functional optimization problems in infinite-dimensional vector spaces, for which standard solvers in finite-dimensional Euclidean spaces (e.g., \texttt{cvx}) are not applicable. 
\end{itemize}
Compared to traditional short-term approaches that solve an optimization problem such as \eqref{eq:QoS_det} and \eqref{eq:maxmin_det} for each channel (or global CSI) realization, the long-term approach in this study is particularly attractive for large-scale MIMO systems due to the following reasons:
\begin{itemize}
	\item First, the power control vector and beamforming design can be updated following large-scale fading variations, which are typically several order of magnitude slower than small-scale fading variations.
	\item Second, the beamforming design space covers distributed implementations where the computation of the beamformers can be split among multiple distributed processing units with limited CSI sharing capabilities, each controlling a subset of the $M$ infrastructure antennas.     
\end{itemize}

\begin{remark}
A known challenge in long-term approaches is the need to identify and track some form of channel statistics that can be used to produce effective solutions. In contrast, short-term approaches can be often implemented based on minimal or even no statistical information. However, the main working assumption of this study is that short-term approaches are not feasible in large-scale MIMO systems, thus making long-term approaches not only an option but rather a necessity.
\end{remark}
	
\subsection{Information constraints}
\label{ssec:info}
In distributed beamforming implementations, each processing unit must often form its portion of the beamformers on the basis of different CSI. In particular, the distributed processing units typically acquire CSI from locally received pilot signals, and then further share these measurements across the network. However, due to limited interconnections, these measurements may not be perfectly shared. As an extreme example, in distributed cell-free massive MIMO networks the access points are typically assumed to compute the beamformers on the basis of only \emph{local} CSI \cite{ngo2017cell}, i.e., without any CSI sharing. However, to date, most related studies do not formally model this type of constraint, so, strictly speaking, they do not truly consider distributed beamforming optimization. To address this limitation, we model constraints related to imperfect CSI sharing following the approach introduced in \cite{miretti2021team}. We first partition the $M = NL$ infrastructure antennas into $L$ groups of $N$ antennas, and assume that the corresponding portions of the beamformers are computed by $L$ separate processing units. Given an arbitrarily distributed tuple $(\rvec{H},S_1,\ldots,S_L)$, where $S_l$ is the available CSI at the $l$th processing unit, we then let 
	\begin{equation}\label{eq:distributedCSI}
		\set{V}_k \eqdef \set{H}_1^N \times \ldots \times \set{H}_L^N,
	\end{equation}
	where $\set{H}_l^N\subseteq \set{H}^N$ denotes the set of $N$-tuples of $\Sigma_l$-measurable functions $\Omega \to \stdset{C}^N$ satisfying $(\forall \rvec{x}\in \set{H}_l^N)$ $\E[\|\rvec{x}\|_2^2]<\infty$, and where $\Sigma_l \subseteq \Sigma$ is the sub-$\sigma$-algebra induced by the CSI $S_l$, which is also called the \emph{information subfield} of the $l$th processing unit. The interested reader is referred to \cite{yukselbook} for an introduction to the measure theoretical notions used in the above definitions. However, we stress that these notions are by no means required for understanding the key results of this study. The crucial point is that, informally, the constraint set $\set{V}_k$ enforces the portion of $\rvec{v}_k$ computed by the $l$th processing unit to be a function of only $S_l$. 

Furthermore, another typical constraint for enabling scalable large-scale MIMO designs is that only a subset of all infrastructure antennas should be used for decoding the message of a given user. For example, cell-free massive MIMO networks are expected to implement some type of user-centric rule \cite{buzzi2020}, which allocates to each user a cluster of serving access points. As shown in \cite{miretti2021team2,miretti2023duality}, this constraint can be modeled by forcing the corresponding portions of $\rvec{v}_k$ to zero, i.e., by replacing $\set{H}_l^N$ in \eqref{eq:distributedCSI} with the set $\{\vec{0}_N~ (\text{a.s.})\}$ if the $l$th processing unit is not serving the $k$th user. Considering this replacement, we remark that $\set{V}_k$ is a subspace of the real vector space $\set{H}^M$ \cite{miretti2023duality}. Concrete examples are provided in Section~\ref{sec:sim}.
	
\section{Problem solution}
\label{sec:solution}
We now derive novel optimal algorithms for solving both Problem~\eqref{eq:QoS} and Problem~\eqref{eq:maxmin}. Our derivation combines the abstract power control framework in \cite{yates95,nuzman07} with the general beamforming design framework in \cite{miretti2021team} (which applies to both centralized and distributed beamforming architectures). Our algorithms are a nontrivial extension of the available algorithms based on \cite{yates95,nuzman07} for solving the related short-term problems \eqref{eq:QoS_det} and \eqref{eq:maxmin_det}.
 	\subsection{Power control with implicit beamforming optimization}
We now map Problem~\eqref{eq:QoS} and Problem~\eqref{eq:maxmin} to equivalent power control problems where the optimization of the beamformers is implicit. More precisely, by exploiting the fact that uplink SINRs are only coupled through the power vector $\vec{p}$, we consider the following optimization problems:
\begin{equation}\label{eq:yates}
	\begin{aligned}
		\underset{\substack{\vec{p}\in \stdset{R}_{+}^K }}{\text{minimize}}
		& \quad  \|\vec{p}\| \\
		\text{subject to} & \quad (\forall k\in\mathcal{K})~u_k(\vec{p}) \geq \gamma_k;
	\end{aligned}
\end{equation}
and
\begin{equation}\label{eq:nuzman}
\begin{aligned}
		\underset{\substack{\vec{p}\in \stdset{R}_{+}^K}}{\text{maximize}}
		& \quad  \min_{k\in\set{K}} \gamma_k^{-1} u_k(\vec{p}) \\
		\text{subject to} & \quad \|\vec{p}\|\leq P,
	\end{aligned}
\end{equation}
where $\|\cdot\|$ denotes a given monotone norm, and where, for each user $k\in \set{K}$, we define the utility
\begin{equation}\label{eq:maxSINR}
(\forall\vec{p}\in\stdset{R}_+^K)~u_k(\vec{p}) \eqdef \sup_{\substack{\rvec{v}_k\in\set{V}_k\\ \E[\|\rvec{v}_k\|_2^2] \neq 0}}\mathsf{SINR}_k(\rvec{v}_k,\vec{p}).
\end{equation}
The above problems can be formally connected to the original long-term joint power control and beamforming design problems as follows.
\begin{proposition}\label{prop:connection}
Let $\vec{p}^\star$ be a solution to Problem~\eqref{eq:yates} (resp. Problem~\eqref{eq:nuzman}) with utilities in \eqref{eq:maxSINR} and norm $\|\cdot\| = \|\cdot\|_1$ (resp., $\|\cdot\| = \|\cdot\|_\infty$). If $(\forall k~\in \set{K})$ $\exists \rvec{v}^\star_k\in \set{V}_k$ such that $u_k(\vec{p}^\star) = \mathsf{SINR}_k(\rvec{v}^\star_k,\vec{p}^\star)$ (i.e., the supremum in \eqref{eq:maxSINR} is attained), then $(\vec{p}^\star,\rvec{v}_1^\star,\ldots,\rvec{v}_K^\star)$ solves Problem~\eqref{eq:QoS} (resp. Problem~\eqref{eq:maxmin}).
\end{proposition}
\begin{proof}
Let us focus on the connection between \eqref{eq:QoS} and \eqref{eq:yates}. By construction, $(\vec{p}^\star,\rvec{v}_1^\star,\ldots,\rvec{v}_K^\star)$ satisfies all constraints of Problem~(5). Hence, $\|\vec{p}^\star\|_1 \geq \tilde{p}$, where $\tilde{p}$ denotes the optimum of Problem~(5). Furthermore, let $(\vec{p},\rvec{v}_1,\ldots,\rvec{v}_K)$ satisfy all constraints of Problem~(5). Then, by definition of supremum, we have that $(\forall k \in \set{K})~u_k(\vec{p}) \geq \mathsf{SINR}_k(\rvec{v}_k,\vec{p}) \geq \gamma_k$, i.e., $\vec{p}$ also satisfies all constraints of Problem~(8). This in turns implies that $\tilde{p}\geq \|\vec{p}^\star\|_1$. By combining the above arguments, we conclude that $\tilde{p}= \|\vec{p}^\star\|_1$ and that  $(\vec{p}^\star,\rvec{v}_1^\star,\ldots,\rvec{v}_K^\star)$ solves Problem~(5). The connection  between \eqref{eq:maxmin} and \eqref{eq:nuzman} follows from similar arguments.

By definition, $(\forall k \in \set{K})~(\forall\vec{p}\in\stdset{R}_+^K)$ $u_k(\vec{p})\geq \mathsf{SINR}_k(\rvec{v}_k,\vec{p})$ for all $\rvec{v}_k \in \set{V}_k$ satisfying $\E[\|\rvec{v}_k\|_2^2] \neq 0$. Furthermore, $\mathsf{SINR}_k(\rvec{v}_k,\vec{p})$ does not depend on the beamformer $\rvec{v}_j$ of any other user $j \neq k$. These two properties imply that the optimum of Problem~\eqref{eq:yates} (resp. Problem~\eqref{eq:nuzman}) with utilities in \eqref{eq:maxSINR} and norm $\|\cdot\| = \|\cdot\|_1$ (resp., $\|\cdot\| = \|\cdot\|_\infty$) is also the optimum of Problem~\eqref{eq:QoS} (resp. Problem~\eqref{eq:maxmin}). This, combined with the considered assumptions on $(\vec{p}^\star,\rvec{v}_1^\star,\ldots,\rvec{v}_K^\star)$, further implies the main statement of the proposition.
\end{proof}

We now demonstrate a key structural property of the utility in \eqref{eq:maxSINR}, which allows us to connect Problem~\eqref{eq:yates} and Problem~\eqref{eq:nuzman} to the axiomatic power control framework in \cite{yates95,nuzman07}, reviewed in the appendix. 
\begin{proposition}\label{prop:fk}
Assume the technical condition $(\forall k~\in \set{K})$ $\set{V}'_k \eqdef \{\rvec{v}_k \in \set{V}_k~|~\E[\rvec{h}_k^\herm \rvec{v}_k]\neq 0\} \neq \emptyset$. Then, for all $k \in \set{K}$ and $\vec{p}\in\stdset{R}_+^K$, the utility \eqref{eq:maxSINR} takes the form $u_k(\vec{p})=\frac{p_k}{f_k(\vec{p})}$, where
\begin{equation*}
~f_k(\vec{p}) \eqdef\inf_{\rvec{v}_k \in \set{V}'_k} \dfrac{p_k\Var(\rvec{h}_k^\herm\rvec{v}_k)+\underset{j\neq k}{\sum}p_j\E[|\rvec{h}_j^\herm\rvec{v}_k|^2]+\E[\|\rvec{v}_k\|_2^2]}{|\E[\rvec{h}_k^\herm\rvec{v}_k]|^2}
\end{equation*} is a positive concave function.
\end{proposition}
\begin{proof}
The following steps hold for all $k \in \set{K}$ and $\vec{p}\in\stdset{R}_{+}^K$. By the Cauchy-Schwarz inequality, we first observe that the bound $\mathsf{SINR}_k(\rvec{v}_k,\vec{p}) \leq \frac{p_k|\E[\rvec{h}_k^\herm\rvec{v}_k]|^2}{\E[\|\rvec{v}_k\|_2^2]} \leq \frac{p_k\E[\|\rvec{h}_k\|_2^2]\E[\|\rvec{v}_k\|_2^2]}{\E[\|\rvec{v}_k\|_2^2]}=p_k\E[\|\rvec{h}_k\|_2^2]$ holds for all $\rvec{v}_k \in \set{V}_k$ such that $\E[\|\rvec{v}_k\|_2^2]\neq 0$. Then, by noticing that $\rvec{v}_k \in \set{V}'_k \implies \E[\|\rvec{v}_k\|_2^2]\neq 0$, and hence that $\emptyset \neq \set{V}'_k  \subset \{\rvec{v}_k \in \set{V}_k~|~\E[\|\rvec{v}_k\|_2^2] \neq 0\}$, we obtain 
\begin{align*}
&u_k(\vec{p})= \sup_{\substack{\rvec{v}_k\in\set{V}_k\\ \E[\|\rvec{v}_k\|_2^2] \neq 0}}\mathsf{SINR}_k(\rvec{v}_k,\vec{p}) \\
&= \max\Big\{\sup_{\rvec{v}_k\in\set{V}'_k}\mathsf{SINR}_k(\rvec{v}_k,\vec{p}), \sup_{\substack{\rvec{v}_k\in\set{V}_k\backslash \set{V}_k'\\ \E[\|\rvec{v}_k\|_2^2] \neq 0}}\mathsf{SINR}_k(\rvec{v}_k,\vec{p}) \Big\} \\
&= \max\Big\{\sup_{\rvec{v}_k\in\set{V}'_k}\mathsf{SINR}_k(\rvec{v}_k,\vec{p}),0 \Big\} = \sup_{\rvec{v}_k\in\set{V}'_k}\mathsf{SINR}_k(\rvec{v}_k,\vec{p})\\
&= p_k \sup_{\rvec{v}_k\in\set{V}'_k} \dfrac{|\E[\rvec{h}_k^\herm\rvec{v}_k]|^2}{p_k\Var(\rvec{h}_k^\herm\rvec{v}_k)+\underset{j\neq k}{\sum}p_j\E[|\rvec{h}_j^\herm\rvec{v}_k|^2]+\E[\|\rvec{v}_k\|_2^2]}\\
&= p_k/f_k(\vec{p}),
\end{align*}
where the second equality follows from elementary properties on the supremum of the union of bounded sets, and where the third  equality follows by definition of $\set{V}_k'$. Furthermore, we observe that $f_k$ is concave because it is constructed by taking the point-wise infimum of affine (and hence concave) functions. Finally, again by the Cauchy-Schwarz inequality, we observe that $f_k(\vec{p})\geq \inf_{\rvec{v}_k \in \set{V}'_k}\frac{\E[\|\rvec{v}_k\|_2^2]}{|\E[\rvec{h}_k^\herm\rvec{v}_k]|^2} \geq \frac{1}{\E[\|\rvec{h}_k\|_2^2]} >0$, where the last inequality follows since $\set{V}_k' \neq \emptyset \implies \E[\|\rvec{h}_k\|_2^2] \neq 0$. 
\end{proof}
In particular, the above proposition allows us to characterize and compute a solution to Problem~\eqref{eq:yates} and Problem~\eqref{eq:nuzman} by focusing on the mapping 
\begin{equation*}
(\forall \vec{p}\in \stdset{R}_{+}^K)~T(\vec{p})\eqdef (\gamma_1f_1(\vec{p}),\ldots,\gamma_Kf_K(\vec{p})),
\end{equation*}
as stated in the following two propositions.

\begin{proposition}\label{prop:optimum_yates} Assume $(\forall k~\in \set{K})$ $\set{V}'_k  \neq \emptyset$, and that a solution to Problem~\eqref{eq:yates} exists. Then, each of the following holds:
\begin{enumerate}[(i)]
		\item The set $\mathrm{Fix}(T)$ of fixed points of the mapping $T$ is a singleton, and the unique fixed point $\vec{p}^\star\in\mathrm{Fix}(T)$ is the unique solution to Problem~\eqref{eq:yates}.
		\item For every $\vec{p}_1 \in \stdset{R}_{+}^K$, the sequence $(\vec{p}_i)_{i\in\stdset{N}}$ generated via $(\forall i \in \stdset{N})~\vec{p}_{i+1} = T(\vec{p}_i)$ converges to $\vec{p}^\star\in\mathrm{Fix}(T)$.
\end{enumerate}
\end{proposition}
\begin{proof}
By Proposition~\ref{prop:fk}, the utilities in \eqref{eq:maxSINR} satisfy the axioms of the power control framework in \cite{yates95} (see  Proposition~\ref{proposition.concavity} in the appendix). Hence, the proof follows from a direct application of the results in \cite{yates95}, reviewed in Appendix~\ref{app:yates}.
\end{proof}

\begin{proposition}\label{prop:optimum_nuzman} Assume $(\forall k~\in \set{K})$ $\set{V}'_k  \neq \emptyset$. Then, each of the following holds:
\begin{enumerate}[(i)]
		\item The set $\mathrm{Fix}(\tilde{T})$ of fixed points of the mapping $(\forall \vec{p}\in \stdset{R}_+^K)$ $\tilde{T}(\vec{p})\eqdef \frac{P}{\|T(\vec{p})\|}T(\vec{p})$ is a singleton, and the unique fixed point $\vec{p}^\star\in\mathrm{Fix}(\tilde{T})$  is a solution to  Problem~\eqref{eq:nuzman}.
\item For every $\vec{p}_1 \in \stdset{R}_{+}^K$, the sequence $(\vec{p}_i)_{i\in\stdset{N}}$ generated via $(\forall i \in \stdset{N})~\vec{p}_{i+1} = \tilde{T}(\vec{p}_i)$ converges to $\vec{p}^\star\in\mathrm{Fix}(\tilde{T})$.
\end{enumerate}
\end{proposition}
\begin{proof}
Similarly to Proposition~\ref{prop:optimum_yates}, the proof follows from an application of the results in \cite{nuzman07}, reviewed in Appendix~\ref{app:nuzman}. 
\end{proof}
The main implication of Proposition~\ref{prop:optimum_yates} and Proposition~\ref{prop:optimum_nuzman} is that, if the mild technical condition $(\forall k \in \set{K})$ $\set{V}_k' \neq \emptyset$ holds, an optimal power vector $\vec{p}^\star$ for Problem~\eqref{eq:QoS} and Problem~\eqref{eq:maxmin} can be obtained via (possibly normalized) fixed-point iterations, where each $i$th step involves the evaluation of the mapping
\begin{equation*}
T(\vec{p}) = \left(\frac{\gamma_1p_1}{u_1(\vec{p})},\ldots,\frac{\gamma_Kp_K}{u_K(\vec{p})} \right)
\end{equation*}
for $\vec{p}=\vec{p}_i\in \stdset{R}_{++}^K$, which in turn involves the evaluation of $u_k(\vec{p})$ in \eqref{eq:maxSINR}. The main challenge of this approach lies in the computation of $u_k(\vec{p})$, which requires solving a SINR maximization problem over the set of feasible beamformers. While this is trivial for the two particular cases from the previous literature described in Remark~\ref{rem:competing} (supremum of a singleton, and maximization of a generalized Rayleigh quotient, respectively), the more general setup considered in this work requires additional care, even for the relatively simple case of centralized beamforming design. The following section discusses how to perform this step, and it shows that the supremum in \eqref{eq:maxSINR} is attained for all $\vec{p}\in \stdset{R}_{++}^K$ under the technical condition $\set{V}_k' \neq \emptyset$, which is typically satisfied in practice. (Note that $\set{V}_k' = \emptyset$ leads to trivial problem formulations, as it forces the SINR to zero.)  
	
\begin{remark} The condition $\set{V}_k'\neq \emptyset$ holds for most fading and CSI distributions of practical interest. However, note that $\set{V}_k'\neq \emptyset$ does not hold if the infrastructure has no CSI (equivalently, if $\rvec{v}_k$ and $\rvec{H}$ are independent) and, in addition, if the channels have zero mean. In this case, $(\forall \rvec{v}_k \in \set{V}_k)~\E[\rvec{h}_k^\herm\rvec{v}_k] = \E[\rvec{h}_k]^\herm\E[\rvec{v}_k] = 0$ holds. This is related to one of the main drawbacks of the UatF bounding technique, which requires either sufficient CSI or a strong channel mean to produce nonzero rate bounds.
\end{remark}

\subsection{SINR maximization via MSE minimization}
Solving the SINR maximization problems to compute the functions $u_k$ in \eqref{eq:maxSINR} appears quite challenging for the following reasons: (i) the non-convex fractional objective involving expectations in both the numerator and denominator, and (ii) the information constraints. However, the next proposition shows that (i) is not the main challenge, because we can consider an alternative and simpler convex utility, in the same spirit of the known relation between SINR and MSE for fixed channel matrices. Let us consider for all $k\in \set{K}$ the MSE between the signal $x_k\sim \CN(0,1)$ of user~$k$ and its soft estimate $\hat{x}_k \eqdef \rvec{v}_k^\herm\rvec{y}$ obtained by processing the output of the uplink channel $\rvec{y} \eqdef \sum_{k\in \set{K}}\sqrt{p_k}\rvec{h}_kx_k + \rvec{n}$ with noise $\rvec{n}\sim \CN(\vec{0},\vec{I})$, where the noise and all user signals are mutually independent and independent of $(\rvec{H},\rvec{v}_1,\ldots,\rvec{v}_K)$. Specifically, let $(\forall k \in \set{K})(\forall \rvec{v}_k \in \set{V}_k)(\forall \vec{p}\in \stdset{R}_+^K)$
	\begin{equation*}
		\begin{split}
			\mathsf{MSE}_k(\rvec{v}_k,\vec{p}) \eqdef & \; \E[|x_k - \hat{x}_k|^2]\\
			=& \; \E\left[\|\vec{P}^{\frac{1}{2}}\rvec{H}^\herm\rvec{v}_k-\vec{e}_k\|_2^2\right] + \E\left[\|\rvec{v}_k\|_2^2\right],
		\end{split}
	\end{equation*} 
	where $\vec{P}\eqdef \mathrm{diag}(\vec{p})$, and where the second equality can be verified via simple manipulations. We then have:
	\begin{proposition}
		\label{prop:MSE}
		For given $k\in\set{K}$ and $\vec{p}\in\stdset{R}_{+}^K$, consider the optimization problem 
		\begin{equation}\label{eq:MSE}
			\underset{\rvec{v}_k \in \set{V}_k}{\emph{minimize}}~\mathsf{MSE}_k(\rvec{v}_k,\vec{p}).
		\end{equation}
		Problem~\eqref{eq:MSE} has a unique solution $\rvec{v}_k^\star\in \set{V}_k$. Furthermore,  
\begin{equation*}
\mathsf{MSE}_k(\rvec{v}_k^\star,\vec{p}) = \frac{1}{1+u_k(\vec{p})}.
\end{equation*} 
Moreover, if $\set{V}_k'\neq \emptyset$ and $p_k >0$, then $
			u_k(\vec{p}) = \mathsf{SINR}_k\left(\rvec{v}_k^\star,\vec{p}\right)$.
	\end{proposition}
	\begin{proof} 
The proof follows similar steps as in the proof of \cite[Proposition~8]{miretti2023duality}, and they are reported here for better clarity. Existence and uniqueness of the solution follows from \cite[Lemma~4]{miretti2021team}. For the second statement, we observe that  
\begin{align*}
&\mathsf{MSE}_k(\rvec{v}^\star_k,\vec{p})=\inf_{\rvec{v}_k\in\mathcal{V}_k}\mathsf{MSE}_k(\rvec{v}_k,\vec{p}) \\
			&\overset{(a)}{=} \inf_{\substack{\rvec{v}_k\in\set{V}_k\\ \E[\|\rvec{v}_k\|_2^2] \neq 0}} \inf_{\beta \in \stdset{C}} \mathsf{MSE}_k(\beta\rvec{v}_k,\vec{p}) \\
			&\overset{(b)}{=} \inf_{\substack{\rvec{v}_k\in\set{V}_k\\ \E[\|\rvec{v}_k\|_2^2] \neq 0}} 1-\dfrac{p_k|\E[\rvec{h}_k^\herm\rvec{v}_k]|^2}{\sum_{j\in\set{K}}p_j\E[|\rvec{h}_j^\herm\rvec{v}_k|^2]+\E\left[\|\rvec{v}_k\|_2^2\right]} \\
			&=  \inf_{\substack{\rvec{v}_k\in\set{V}_k\\ \E[\|\rvec{v}_k\|_2^2] \neq 0}}\dfrac{1}{1+\mathsf{SINR}_k(\rvec{v}_k,\vec{p})} = \dfrac{1}{1+u_k(\vec{p})},
\end{align*}
where $(a)$ follows from $(\forall \beta\in \stdset{C})(\forall \rvec{v}_k \in \mathcal{V}_k)$ $\beta \rvec{v}_k \in\mathcal{V}_k$, and where $(b)$ follows from standard minimization of scalar quadratic forms (see, e.g., \cite[Appendix~D]{miretti2023duality}). 
It remains to verify that if $\set{V}_k'\neq \emptyset$ and $p_k>0$, then $\rvec{v}_k^\star$ also attains the supremum in \eqref{eq:maxSINR}. This can be verified by first observing that, by Proposition~\ref{prop:fk},  $u_k(\vec{p})>0 \implies \mathsf{MSE}_k(\rvec{v}^\star_k,\vec{p}) < 1 \implies \E[\|\rvec{v}_k^\star\|_2^2] \neq 0$, where the last step follows since $\E[\|\rvec{v}_k^\star\|_2^2] = 0 \iff \rvec{v}_k^\star = \vec{0} \text{ a.s.}\implies \mathsf{MSE}_k(\rvec{v}^\star_k,\vec{p}) = 1$. Then, the proof is concluded by following similar steps as in the above chain of equalities: $
\mathsf{MSE}_k(\rvec{v}_k^\star,\vec{p}) =  \inf_{\beta \in \stdset{C}} \mathsf{MSE}_k(\beta\rvec{v}_k^\star,\vec{p}) = (1+\mathsf{SINR}_k(\rvec{v}_k^\star,\vec{p}))^{-1}$.  
\end{proof}
	In the particular case where all users are served by all antennas based on perfect knowledge of the channel $\rvec{H}$ (trivial information constraints), the solution to Problem~\eqref{eq:MSE} is simply given by the well-known MMSE beamforming solution
	\begin{equation}\label{eq:MMSE}
		(\forall k \in \set{K})~\rvec{v}_k = \left(\rvec{H}\vec{P}\rvec{H}^\herm + \vec{I}\right)^{-1}\rvec{h}_k\sqrt{p_k}.
	\end{equation}
	Variations of \eqref{eq:MMSE} are readily obtained also for the case of imperfect yet perfectly shared CSI among all processing units serving each user, under mild assumptions on the channel estimation error model (see, e.g., \cite[Corollary~4.3]{massivemimobook} or the example in Section~\ref{sec:sim}).
	
	For more general and nontrivial information constraints, the solution to Problem~\eqref{eq:MSE} can be interpreted as the best distributed approximation of regularized channel inversion, and it can be obtained via the recently developed \emph{team} MMSE beamforming method given in \cite{miretti2021team}.\footnote{Although presented in the context of downlink beamforming, the method in \cite{miretti2021team} directly applies to the uplink case.} This method states that the solution to \eqref{eq:MSE} corresponds to the unique solution to a linear feasibility problem, which, depending on the information constraints, can be solved in closed-form or approximately via efficient numerical techniques. For example, in the context of user-centric cell-free massive MIMO networks, this method is used in \cite{miretti2021team} \cite{miretti2023duality} to provide the first explicit solution to the problem of optimal local beamforming design. Furthermore, it can be applied to more advanced information constraints involving delayed CSI sharing \cite{miretti2024delayed}, or exploiting the peculiarities of serial fronthauls such as in the so-called \emph{radio stripe} concept \cite{miretti2021team} \cite{miretti2021team2}. 
	
\begin{remark}\label{rem:longterm}
The connection between the MMSE solution in \eqref{eq:MMSE} (or its variations) and more traditional uplink ergodic rate bounds of the type in \eqref{eq:R}, is well-known (see, e.g., \cite[Corollary~4.2]{massivemimobook}). However, the connection between  Problem~\eqref{eq:MSE} and the UatF bound in \eqref{eq:uatf} given by Proposition~\ref{prop:MSE} seems largely overlooked in the literature. A known advantage of \eqref{eq:uatf} is that it makes long-term power control tractable \cite{marzetta2016fundamentals,massivemimobook}. This study shows that this tractability also extends to joint long-term power control and beamforming design, and it even allows to cover distributed beamforming design under nontrivial information constraints. In addition, \eqref{eq:uatf} can be rigorously connected to a similar downlink ergodic rate bound (the so-called hardening bound) by means of uplink-downlink duality arguments under sum \cite{massivemimobook} and per-antenna power constraints \cite{miretti2023duality}, hence providing optimal downlink designs.
\end{remark}
	
\subsection{Proposed algorithms}
We now have all the elements for deriving an algorithmic solution to both Problem~\eqref{eq:QoS} and Problem~\eqref{eq:maxmin}, i.e., to perform joint optimal long-term power control and beamforming design according to the considered criteria. To keep the proposed algorithms general, in the following section we do not specify the setup (i.e., $\rvec{H}$ and the information constraints). For particular applications to cell-free massive MIMO networks, we refer to our numerical examples in Section~\ref{sec:sim}. Specifically, we propose to solve Problem~\eqref{eq:QoS} via the fixed-point iterations in Proposition~\ref{prop:optimum_yates}(ii) combined with the general beamforming design criterion in Problem~\eqref{eq:MSE}, as summarized below:
	
\begin{center}	
\begin{tabular}{|c|}
\hline
\textbf{Algorithm 1} for solving Problem~\eqref{eq:QoS} \\
\hline
\begin{minipage}{0.8\linewidth}
	\begin{algorithmic}[1]
		\Require $\vec{p} \in \stdset{R}_{++}^K$
		\Repeat
		\For{$k \in \set{K}$}
			\State $\rvec{v}_k \gets \arg\min_{\rvec{v}_k \in \set{V}_k} \mathsf{MSE}_k(\rvec{v}_k, \vec{p})$ 
		\EndFor
		\State $\vec{p} \gets \begin{bmatrix}
			\frac{\gamma_1 p_1}{\mathsf{SINR}_1(\rvec{v}_1, \vec{p})} & \ldots & \frac{\gamma_K p_K}{\mathsf{SINR}_K(\rvec{v}_K, \vec{p})}
		\end{bmatrix}$ 
		\Until{no significant progress is observed.}
	\end{algorithmic}
\end{minipage} \\
\hline
\end{tabular}
\end{center}

Similarly, we propose to solve Problem~\eqref{eq:maxmin} via the fixed-point iterations in Proposition~\ref{prop:optimum_nuzman}(ii), as summarized below:
\begin{center}	
\begin{tabular}{|c|}
\hline
\textbf{Algorithm 2} for solving Problem~\eqref{eq:maxmin} \\
\hline
\begin{minipage}{0.8\linewidth}
	\begin{algorithmic}[1]
		\Require $\vec{p} \in \stdset{R}_{++}^K$
		\Repeat
		\For{$k\in\set{K}$}
		\State $\rvec{v}_k \gets  \arg\min_{\rvec{v}_k\in\set{V}_k}\mathsf{MSE}_k(\rvec{v}_k,\vec{p})$ 
		\EndFor
		\State $\vec{t} \gets \begin{bmatrix}
			\frac{\gamma_1p_1}{\mathsf{SINR}_1(\rvec{v}_1,\vec{p})} & \ldots & \frac{\gamma_Kp_K}{\mathsf{SINR}_K(\rvec{v}_K,\vec{p})}
		\end{bmatrix}$
		\State $\vec{p}\gets \frac{P}{\|\vec{t}\|_{\infty}}\vec{t}$ 
		\Until{no significant progress is observed.}
	\end{algorithmic}
\end{minipage} \\
\hline
\end{tabular}
\end{center}

As already discussed, convergence to an optimal solution can be established for both algorithms under the mild technical condition $(\forall k \in \set{K})~\set{V}'_k\neq \emptyset$. Clearly, the above algorithms rely on solving Problem~\eqref{eq:MSE} at each iteration, which, as already mentioned, can be done in many practical cases following \cite{miretti2021team}, and often in closed-form as a function of appropriate statistical parameters. We remark once more that $\rvec{v}_k$ is a function of the CSI, and not a deterministic vector. In practice, as we will see in the numerical examples, storing $\rvec{v}_k$ during the algorithm execution means storing the statistical parameters of the function (e.g., the regularization factor in a regularized channel inversion block). 

An important implementation aspect is the choice of the numerical technique for approximating the MSEs. One conceptually simple option is to approximate all expectations using standard Monte-Carlo methods based on a given training set of $N_\text{sim}$ channel realizations. This is also the strategy adopted in our numerical simulations in Section~\ref{sec:sim}. Although a formal analysis is left as an interesting future research direction, we anticipate that, in our simulations, a relatively small training set ($N_\text{sim}=100$) compared to the size of the network seems sufficient for ensuring statistically stable performance. Another interesting research direction is the study of more efficient alternative techniques for approximating the MSEs.
	
\section{Simulated case study}
\label{sec:sim}
As an application of our theoretical findings, this section presents an extensive numerical study of a large-scale wideband MIMO system operating in the sub-6 GHz band. The setup, detailed in Sections~\ref{ssec:setup_sim} and~\ref{ssec:info_sim}, follows closely the canonical assumptions in the cell-free massive MIMO literature \cite{demir2021}. Specifically, we consider a representative dense urban deployment featuring multiple geographically distributed APs serving multiple single-antenna users in the practical underloaded antenna regime ($M \gg K$) with partial CSI. We analyze three distinct levels of cooperation, corresponding to small-cell implementations and user-centric cell-free implementations with either centralized or distributed processing, as defined in \cite{demir2021}. After demonstrating the convergence of the algorithms in Section~\ref{ssec:convergence}, Section~\ref{ssec:comparison_info} presents an updated comparison of the three implementations in light of our main contribution. Additionally, Section~\ref{ssec:centr}, Section~\ref{ssec:distr}, and Section~\ref{ssec:cells} present a separate comparison of the three implementations against competing short-term and long-term approaches. We report that  the main conclusions of our experiments are relatively insensitive to the choice of modeling parameters, since all the schemes are optimal and hence do not exhibit unpredictable artifacts such as convergence to local optima.

\subsection{Simulation setup}\label{ssec:setup_sim}
We consider a large-scale MIMO network composed by $K=64$ users uniformly distributed within a squared service area of size $500\times 500~\text{m}^2$, and $L=16$ regularly spaced APs with $N=8$ antennas each. By neglecting for simplicity spatial correlation, we let each sub-vector $\rvec{h}_{l,k}$ of $\rvec{h}_k^{\herm} =: [\rvec{h}_{1,k}^\herm, \ldots \rvec{h}_{L,k}^\herm]$ be independently distributed as $\rvec{h}_{l,k} \sim \CN\left(\vec{0}, \beta_{l,k}\vec{I}_N\right)$, where $\beta_{l,k}>0$ denotes the channel gain between AP $l$ and user $k$. We follow the same 3GPP-like path-loss model adopted in \cite{demir2021} for a $2$ GHz carrier frequency:
	\begin{equation*}
		\beta_{l,k} = -36.7 \log_{10}\left(D_{l,k}/1 \; \mathrm{m}\right) -30.5 + Z_{l,k} -\sigma^2 \quad \text{[dB]},
	\end{equation*}
	where $D_{l,k}$ is the distance between AP $l$ and user $k$ including a difference in height of $10$ m, and $Z_{l,k}\sim \mathcal{N}(0,\rho^2)$ [dB] are shadow fading terms with deviation $\rho = 4$. The shadow fading is correlated as $\E[Z_{l,k}Z_{j,i}]=\rho^22^{-\frac{\delta_{k,i}}{9 \text{ [m]}}}$ for all $l=j$ and zero otherwise, where $\delta_{k,i}$ is the distance between user $k$ and user $i$. The noise power is 
	$\sigma^2 = -174 + 10 \log_{10}(B) + F$ [dBm],
	where $B = 20$ MHz is the bandwidth, and $F = 7$ dB is the noise figure. Under this setup, in the following sections we focus on the power minimization problem \eqref{eq:QoS} with SINR constraints corresponding to a minimum per-user rate  of $2.5$ bit/s/Hz, and on the max-min fair problem \eqref{eq:maxmin} with unitary weights and per-user power budget $P = 20$ dBm. 

\subsection{Information constraints}\label{ssec:info_sim}
For the choice of information constraints, we specialize the model in Section~\ref{ssec:info} by considering the three representative scenarios described below. First, we assume that each user $k\in \set{K}$ is served only by its $Q$ strongest APs, that is, by the subset of APs indexed by $\set{L}_k\subseteq \set{L}$, where each set $\set{L}_k$ is formed by ordering $\set{L}$ w.r.t. decreasing $\beta_{l,k}$ and by keeping only the first $Q$ elements. Furthermore, we assume each AP $l$ to acquire (and possibly exchange) local CSI $\hat{\rvec{H}}_l\eqdef[\hat{\rvec{h}}_{l,1},\ldots,\hat{\rvec{h}}_{l,K}]$,  
	\begin{equation*}
		(\forall k \in\set{K})~\hat{\rvec{h}}_{l,k} \eqdef \begin{cases}
			\rvec{h}_{l,k} & \text{if } l \in \set{L}_k,\\
			\vec{0} & \text{otherwise}.
		\end{cases}
	\end{equation*}
	This model reflects the canonical cell-free massive MIMO implementation with pilot-based local channel estimation, and possible CSI sharing through the fronthaul \cite{demir2021}; we neglect for simplicity estimation noise and pilot contamination effects, and we focus on a simple model where small-scale fading coefficients are either perfectly known at some APs or completely unknown. In contrast, we assume that the relevant long-term statistical information is perfectly shared within the network. We then study the following information constraints, corresponding to different solutions to the subproblem~\eqref{eq:MSE} solved at each iteration of the proposed algorithms.
	
	\subsubsection{Coordinated small cells} We assume each user to be served only by its strongest AP, i.e., $Q=1$, and no CSI sharing, i.e., $(\forall l \in \set{L})~S_l = \hat{\rvec{H}}_l$. For all $\vec{p}\in \stdset{R}_{++}^K$ and $k \in\set{K}$, the $l$th subvector of the optimal beamformer $\rvec{v}_k$ corresponding to the unique AP $l\in \set{L}_k$ serving user $k$ is given by the known MMSE solution \cite{massivemimobook}
	\begin{equation*}
	\rvec{v}_{l,k} = \rvec{V}^{\mathsf{LMMSE}}_l\vec{e}_k,
	\end{equation*}
	where $(\forall l\in \set{L})~\rvec{V}^{\mathsf{LMMSE}}_l \eqdef \left(\hat{\rvec{H}}_l\vec{P}\hat{\rvec{H}}_l^{\herm}+\vec{\Sigma}_l\right)^{-1}\hat{\rvec{H}}_l\vec{P}^{\frac{1}{2}}$, and $\vec{\Sigma}_l \eqdef \left(1+\sum_{i \in\{k\in\set{K} | l \notin \set{L}_k\}}\gamma_{l,i}p_i\right)\vec{I}_N$ is the local CSI error plus noise covariance matrix. Note that, while the above beamforming function depends on local CSI only, the joint optimization of its parameters is still performed centrally based on global statistical information. Therefore, this example corresponds to a tightly coordinated small-cells system, which can be used as a benchmark for more practical cellular networks.
	
	\subsubsection{Distributed cell-free} We assume each user to be jointly served by its $Q=4$ strongest APs, and no CSI sharing, i.e., $(\forall l \in \set{L})~S_l = \hat{\rvec{H}}_l$. For all $\vec{p}\in \stdset{R}_{++}^K$ and $k \in\set{K}$, the $l$th subvector of the optimal beamformer $\rvec{v}_k$ corresponding to AP $l\in \set{L}$ is the so-called local \textit{team} MMSE solution \cite{miretti2021team,miretti2021team2}
	\begin{equation}\label{eq:LTMMSE}
		(\forall l\in \set{L})~\rvec{v}_{l,k} =\rvec{V}^{\mathsf{LMMSE}}_l\vec{c}_{l,k},
	\end{equation}
	where $\vec{c}_{l,k}\in \stdset{C}^K$ is a statistical precoding stage given by the unique solution to the linear system of equations
\begin{equation*}
\begin{cases}\vec{c}_{l,k} + \sum_{j \in \set{L}_k \backslash \{l\}}\vec{\Pi}_j \vec{c}_{j,k} = \vec{e}_k & \forall l \in \set{L}_k, \\
\vec{c}_{l,k} = \vec{0}_{K\times 1} & \text{otherwise,}
\end{cases}
\end{equation*}
where $(\forall l\in \set{L})~\vec{\Pi}_l \eqdef \E\left[\vec{P}^{\frac{1}{2}}\hat{\rmat{H}}_l^\herm\rmat{V}_l\right]$. This information constraint corresponds to an enhancement of coordinated small-cells systems that leverages coherent joint processing without changing the instantaneous CSI sharing requirements.

	\subsubsection{Centralized cell-free} We assume each user to be jointly served by its $Q=4$ strongest APs, and perfect CSI sharing, i.e., $(\forall l \in \set{L})~S_l = \hat{\rvec{H}}^\herm\eqdef [\hat{\rvec{H}}_1^\herm\ldots,\hat{\rvec{H}}_L^\herm]$. For all $\vec{p}\in \stdset{R}_{++}^K$ and $k \in\set{K}$, the optimal beamformer $\rvec{v}_k$ is given by the known centralized MMSE solution \cite{demir2021} 
	\begin{equation}\label{eq:CMMSE}
\rvec{v}_k = \left(\vec{D}_k\hat{\rvec{H}}\vec{P}\hat{\rvec{H}}^\herm\vec{D}_k+\vec{D}_k\vec{\Sigma}\vec{D}_k\right)^{-1}\vec{D}_k\hat{\rvec{H}}\vec{P}^\frac{1}{2}\vec{e}_k, 
\end{equation}
where $\vec{\Sigma} := \mathrm{diag}(\vec{\Sigma}_1,\ldots,\vec{\Sigma}_L)$ is the block-diagonal global CSI error plus noise covariance matrix, and $\vec{D}_k \eqdef \mathrm{diag}(\vec{D}_{1,k},\ldots,\vec{D}_{L,k})$ is a block-diagonal matrix satisfying
\begin{equation*}
(\forall l \in\set{L})~\vec{D}_{l,k} = \begin{cases}\vec{I}_N & \text{if } l \in \set{L}_k,\\
\vec{0}_{N\times N} & \text{otherwise}.
\end{cases}
\end{equation*}
This information constraint corresponds to an enhancement of distributed cell-free systems where the beamformer of each user $k$ is computed by a cluster processor endowed with the aggregate channel $\vec{D}_k\hat{\rvec{H}}$ of size $NL \times K$.

Although neglected for simplicity, we remark that spatial correlation, channel estimation noise, pilot contamination effects, and different clustering rules could be readily incorporated in the above models following \cite{miretti2021team, miretti2023duality}. We further point out that, as shown in \cite{miretti2021team}, for the particular channel and local CSI error model considered here, \eqref{eq:LTMMSE} reduces to the known local MMSE solution with optimal LSFD design \cite{demir2021}.
\begin{figure}
		\centering
		\subfloat[]{\includegraphics[width=0.9\columnwidth]{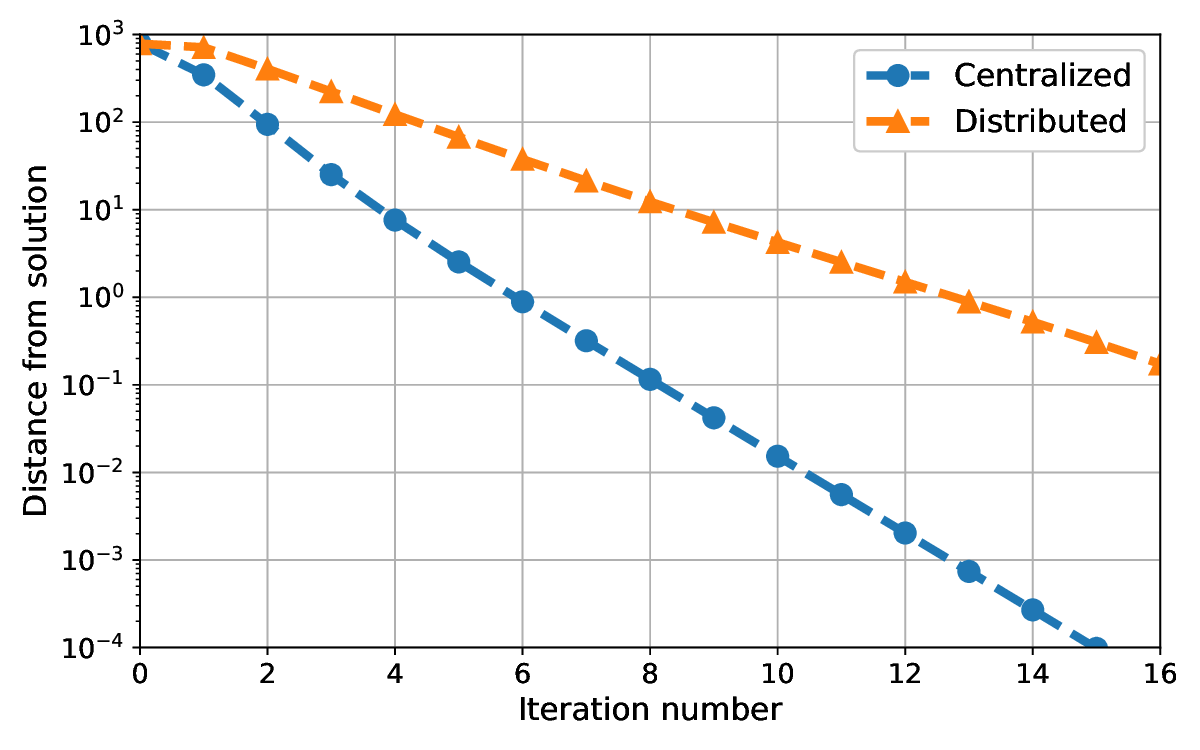}
			\label{fig:convergence_FP}}
			
		\subfloat[]{\includegraphics[width=0.9\columnwidth]{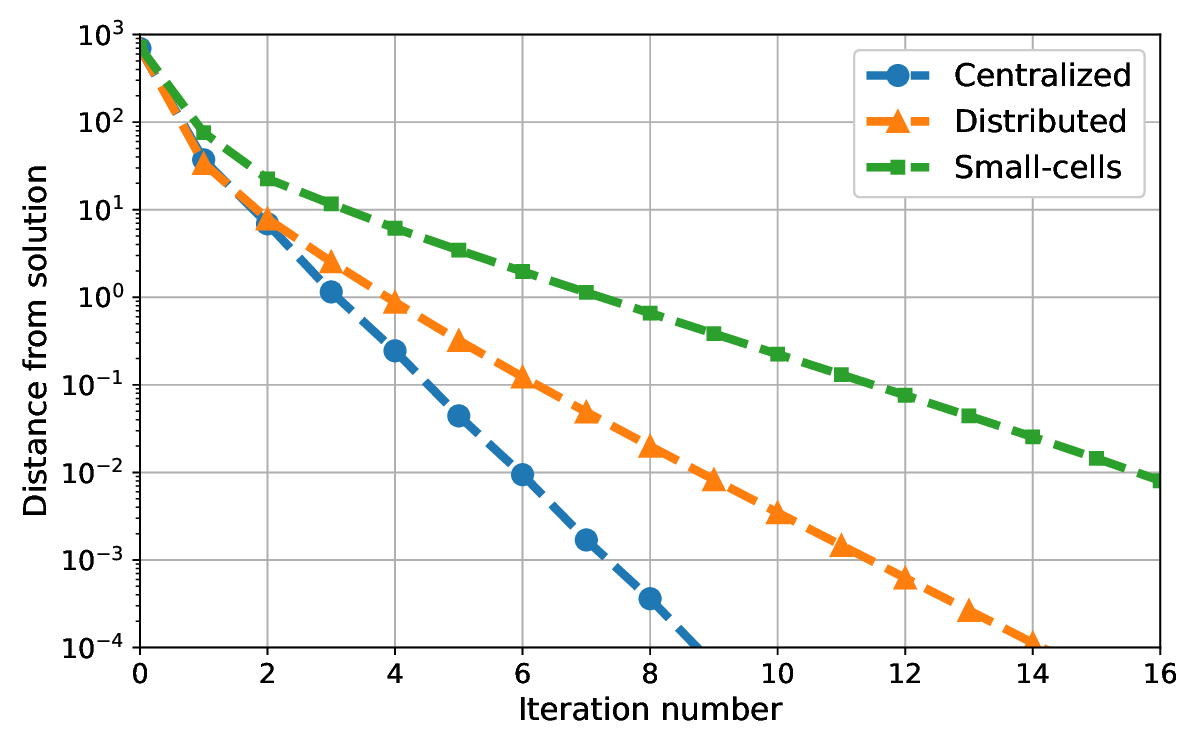}
			\label{fig:convergence_NFP}}
		\caption{Convergence behaviour of (a) the fixed-point iterations for computing a solution to Problem~\eqref{eq:QoS}; and (b) the normalized fixed-point iterations for computing a solution to Problem~\eqref{eq:maxmin} under different information constraints.}
		\label{fig:convergence}
	\end{figure}
	
	\subsection{Convergence behavior of the proposed algorithms}\label{ssec:convergence}
We run several steps of the proposed fixed-point algorithms used for computing an optimal solution to Problem~\eqref{eq:QoS} and Problem~\eqref{eq:maxmin}, for the information constraints described above and a single random user drop. In both cases, we monitor convergence to an optimal solution (if it exists) in terms of the distance $\|\vec{p}_i-\vec{p}^\star\|_2$, where $\vec{p}_i$ denotes the output of the $i$th step of the considered fixed-point algorithm, and where $\vec{p}^\star$ denotes the optimal power vector. The initialization is set to $\vec{p}_1=P\vec{1}$, and all expectations are approximated using a fixed training set of $N_{\mathsf{sim}}=100$ channel realizations. We report that, for our experiments, a higher $N_{\mathsf{sim}}$ provides indistinguishable performance.\footnote{The same observation holds for all the numerical results in this study.} It is worthwhile underlying that the implemented algorithms operate only by updating deterministic coefficients using long-term information. For example, for optimal distributed beamforming design, the implemented routines update only the long-term parameters $(\vec{c}_{l,1},\ldots,\vec{c}_{l,K},\vec{\Sigma}_l)_{l\in \set{L}}$.
	
Figure~\ref{fig:convergence_FP} reports the convergence behavior of the fixed-point iterations for computing an optimal solution to Problem~\eqref{eq:QoS}. Due to unfeasible minimum rate constraints, the curve related to the small-cells network  shows a divergent behaviour and hence it is omitted. In contrast, since the minimum rate constraints are feasible for both the centralized and distributed cell-free networks, the corresponding fixed-point iterations converge to an optimal solution, as expected. Furthermore, we observe that convergence is geometric in both cases, as predicted by the discussion in Appendix~\ref{app:yates}. However, we notice a faster convergence for the centralized case. This can be explained using the results in \cite{cavalcante2018spectral,Piotrowski2022}, which show that bounds on the convergence speed can be directly connected to a well-defined notion of spectral radius for non-linear mappings. Informally, in our setup, this notion is related to achievable rates in the interference limited regime. Therefore, the different convergence speed can be explained by the fact that, due to better interference management capabilities, the rates in the centralized case saturate at a larger value than in the distributed case as the SNR increases. The interested reader is referred to \cite{cavalcante2018spectral,Piotrowski2022} for additional details.
	
	Figure~\ref{fig:convergence_NFP} reports the convergence behavior of the normalized fixed-point iterations for computing an optimal solution to Problem~\eqref{eq:maxmin}. In contrast to the unnormalized fixed-point iterations, the normalized fixed-point iterations always converge. Similarly to Figure~\ref{fig:convergence_FP}, we observe geometric convergence as predicted by the discussion in  Appendix~\ref{app:nuzman}. Furthermore, we observe a similar difference in convergence speed across the different information constraints. This difference seems to depend again on the different interference management capabilities as in Figure~\ref{fig:convergence_FP}. A formal estimation of the convergence speed could be obtained, for example, using the bounds in \cite[Remark~2.1.12]{krause2015}. However, we point out that, in contrast to the unnormalized fixed-point iterations, a rigorous connection between convergence speed and interference management capabilities is not reported in the literature and requires further analysis.  

\begin{figure}
		\centering
		\subfloat[]{\includegraphics[width=0.9\columnwidth]{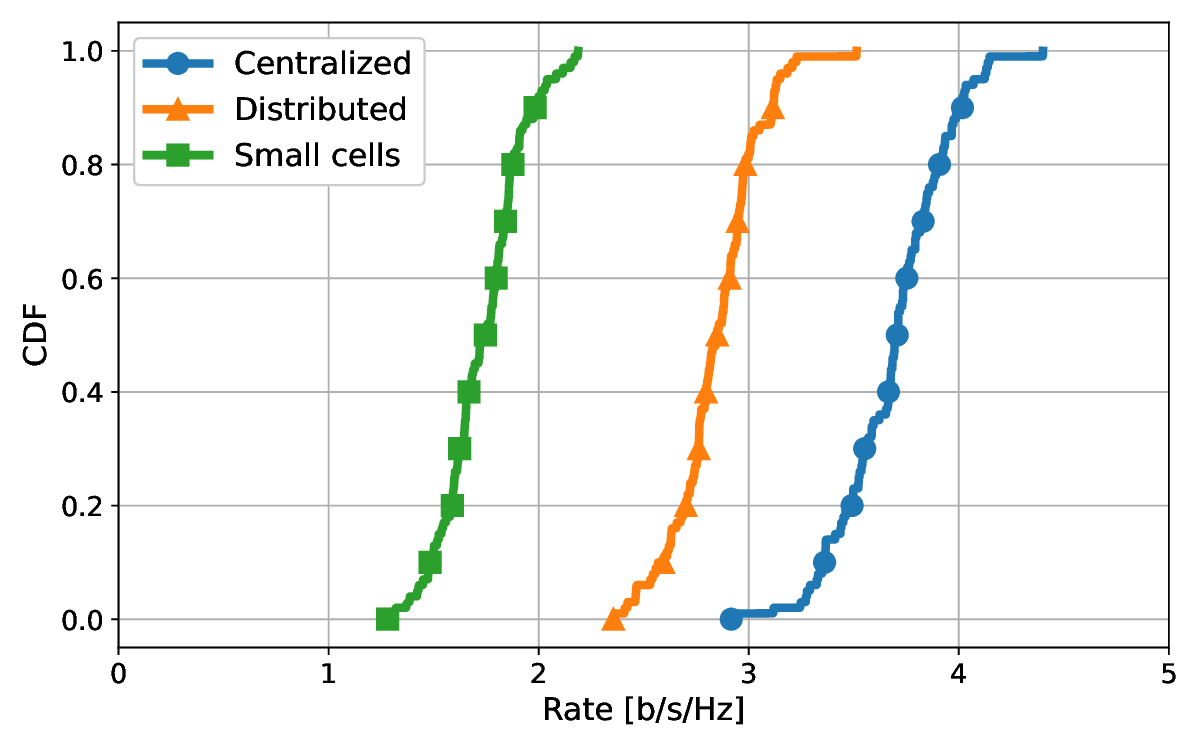}
		\label{fig:CDFUatF}}
	
		\subfloat[]{\includegraphics[width=0.9\columnwidth]{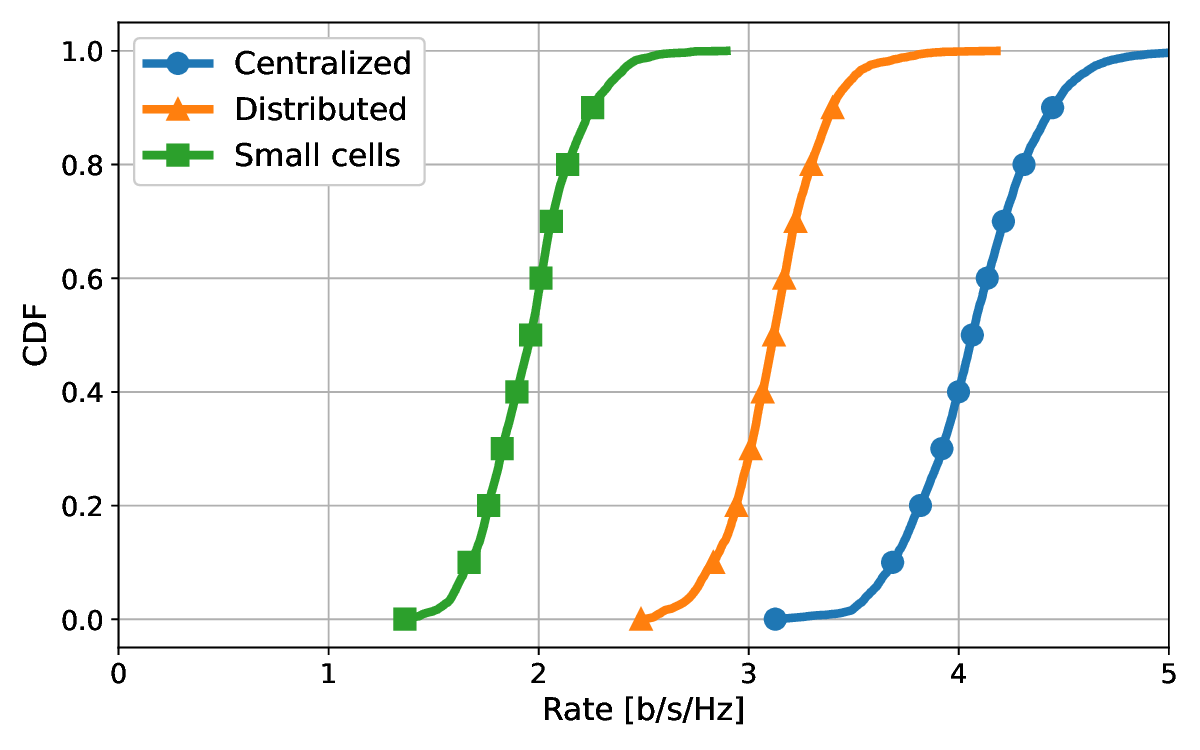}
		\label{fig:CDFcoh}}
		\caption{Empirical CDF of the ergodic rates achieved by the proposed optimal solution to Problem~\eqref{eq:maxmin} under multiple user drops and different information constraints, evaluated using (a) the UatF bound in \eqref{eq:uatf}; and (b) its upper bound \eqref{eq:coh} based on coherent decoding.}
		\label{fig:CDF}
	\end{figure}
\subsection{Comparison of cell-free and small-cells networks}\label{ssec:comparison_info}
We now show how our theoretical results can be applied to revisit the available studies comparing cell-free networks and small-cells networks in delivering uniformly good quality of service to all users (i.e, in the max-min sense), such as \cite{ngo2017cell,bjornson2019making,demir2021}. One main limitation of the available studies is that, being based on either suboptimal power control or suboptimal beamforming design, they do not truly characterize the achievable max-min fair performance of the considered networks and hence they may lead to misleading conclusions.

In particular, we solve the considered max-min fair problem~\eqref{eq:maxmin} with unitary weights for $N_{\mathsf{CDF}}=100$ independent random user drops. Then, we report in Figure~\ref{fig:CDF} the empirical cumulative density function (CDF) of the resulting per-user rates given by the UatF bound \eqref{eq:uatf}. As we can see, for the considered setup, cell-free networks significantly outperform small-cell networks for all percentiles. In addition, although to a lesser extent, the centralized information structure outperforms the distributed one for all percentiles. In contrast, most of the available studies focusing on max-min fairness show similar CDF curves crossing at some given percentile. However, our results show that this is an artifact of the chosen power control and beamforming design, which are not jointly optimized for max-min fairness. Note that a similar observation holds also by setting arbitrary weights in Problem~\eqref{eq:maxmin}. In fact, by the principle that adding information (in our context, replacing an information constraint $\set{V}_k$ with a superset $\set{V}'_k \supseteq \set{V}_k$) cannot hurt, the achievable rate region of cell-free networks is a superset of the cellular one.

As already discussed in Remark~\ref{rem:bounds}, one potential limitation of the above analysis is that it is based on the UatF bound which, despite offering tractable problem formulations and hence leading to the proposed algorithms, may underestimate the true system performance. However, we now demonstrate experimentally that the UatF bound can be a quite effective proxy for optimizing the system according to less pessimistic but intractable performance metrics. More specifically, we report in Figure~\ref{fig:CDF} the performance of the proposed solution to Problem~\eqref{eq:maxmin} in terms of the following classical ergodic rate expressions based on coherent decoding: $(\forall k \in \set{K})$
\begin{equation}\label{eq:coh}
	R_k^{\mathsf{coh}}(\rvec{v}_k,\vec{p}) \eqdef \E\left[\log_2\left(1+\dfrac{p_k|\rvec{h}_k^\herm\rvec{v}_k|^2}{\textstyle\sum_{j\neq k} p_j|\rvec{h}_j^\herm\rvec{v}_k|^2+\|\rvec{v}_k\|_2^2}\right)\right].
\end{equation}
Interestingly, we observe that the general trend of the curves in Figure~\ref{fig:CDFUatF} is maintained in Figure~\ref{fig:CDFcoh}. In paricular, we notice that the gap between the CDFs in Figure~\ref{fig:CDFUatF} and the corresponding ones in Figure~\ref{fig:CDFcoh} is below $0.4$ b/s/Hz for nearly all percentiles. 

\subsection{Comparison of centralized cell-free schemes}
\label{ssec:centr}
\begin{figure}
		\centering
		\subfloat[]{\includegraphics[width=0.9\columnwidth]{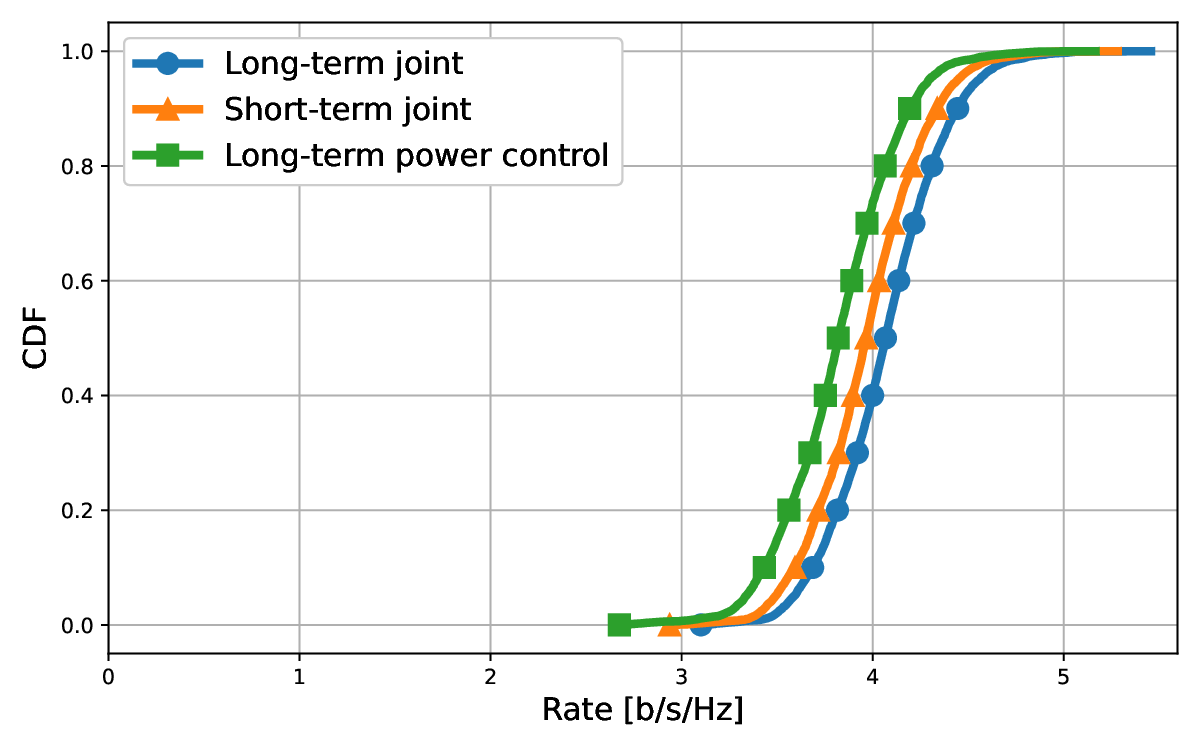}
		\label{fig:comparison_centr}}
		
		\subfloat[]{\includegraphics[width=0.9\columnwidth]{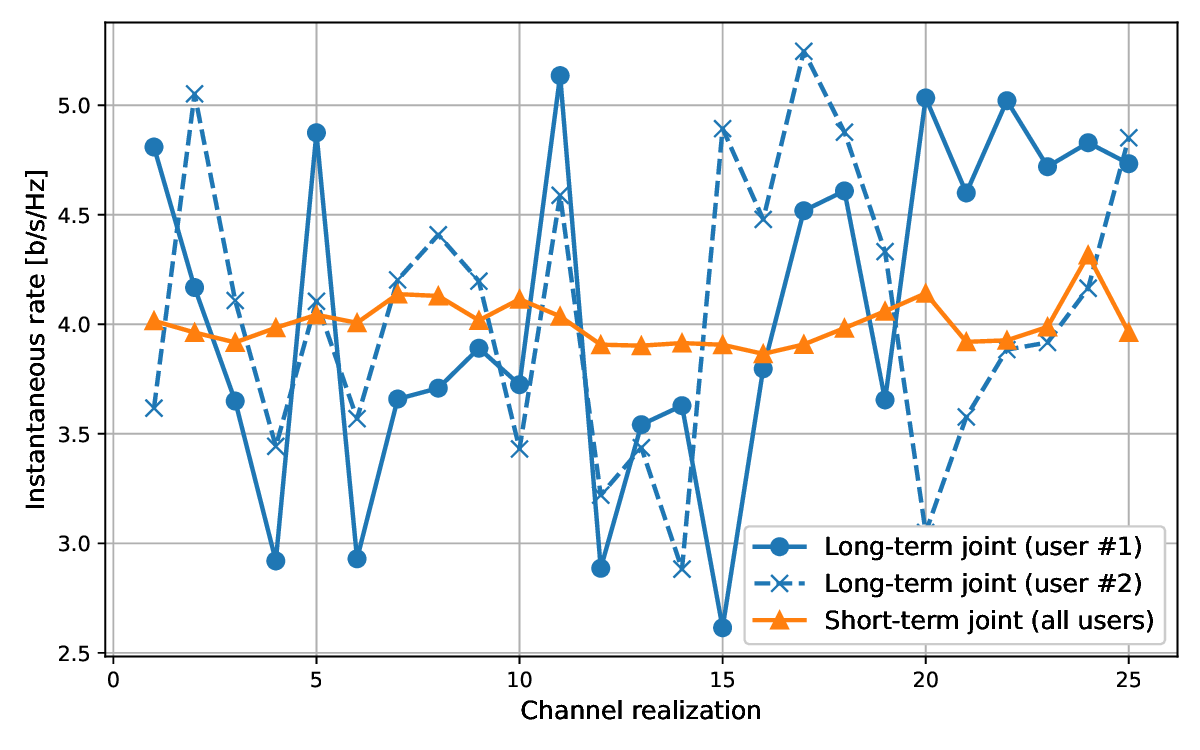}
		\label{fig:comparison_fluct}}
		\caption{Comparison of centralized cell-free schemes: (a) Empirical CDF of the ergodic rates \eqref{eq:coh} achieved by the proposed optimal solution to Problem~\eqref{eq:maxmin} under a centralized cell-free information constraint, and by the competing techniques in Section~\ref{ssec:centr}; and (b) instantaneous rates (as defined for the short term approach) across users and channel realizations.}
\label{fig:comparison_sota}
	\end{figure}
	In this section we compare different methods for max-min fair system design in centralized cell-free networks. In particular, we compare the performance of the proposed algorithm for computing an optimal solution to Problem~\eqref{eq:maxmin} under a centralized information constraint against the following classical techniques:
	\subsubsection{Short-term joint power control and beamforming design}
	 Following a similar approach as the one in \cite{schubert2004solution} for cellular networks with perfect CSI, for each realization $\hat{\vec{H}} \reqdef [\hat{\vec{h}}_1,\ldots,\hat{\vec{h}}_K]$ of the global CSI $\hat{\rvec{H}}$, we solve
	\begin{equation*}
	\begin{aligned}	\underset{\substack{\vec{p}\in \stdset{R}_{+}^K \\ \vec{v}_1,\ldots,\vec{v}_K \in \stdset{C}^{M} }}{\text{maximize}}
		& \quad  \min_{k\in\set{K}} \dfrac{p_k|\hat{\vec{h}}_k^\herm\vec{D}_k\vec{v}_k|^2}{\textstyle\sum_{j\neq k} p_j|\hat{\vec{h}}_j^\herm\vec{D}_k\vec{v}_k|^2+\vec{v}_k^\herm\vec{D}_k\vec{\Sigma}\vec{D}_k\vec{v}_k}\\
		\text{subject to} & \quad \|\vec{p}\|_{\infty}\leq P \\
		& \quad  (\forall k \in \set{K})~\|\vec{v}_k\|_2^2 \neq 0,
	\end{aligned}
\end{equation*}
where the parameters $\vec{\Sigma}$ and $\vec{D}_k$ are the same as in \eqref{eq:CMMSE}. The above objective corresponds to the instantaneous SINR for a variant of the ergodic rate expression in \eqref{eq:coh} that takes into account channel estimation errors \cite[Theorem~5.1]{demir2021}. The optimal beamformers take the same form of \eqref{eq:CMMSE} (maximization of a Rayleigh quotient), with the difference that $\vec{p}$ is a function of the instantaneous channel realization $\hat{\vec{H}}$. The solution can be computed by adapting the block-coordinate ascent method in \cite{schubert2004solution}, or via the normalized-fixed point iterations in  Appendix~\ref{app:nuzman}.
\subsubsection{Long-term power control with fixed beamforming design}
In this case, we solve 
\begin{equation*}
	\begin{aligned}
		\underset{\substack{\vec{p}\in \stdset{R}_{+}^K}}{\text{maximize}}
		& \quad  \min_{k\in\set{K}} \mathsf{SINR}_k(\rvec{v}^{\mathsf{fix}}_k,\vec{p}) \\
		\text{subject to} & \quad \|\vec{p}\|_{\infty}\leq P,
	\end{aligned}
\end{equation*}
where $\rvec{v}^{\mathsf{fix}}_k$ denotes a fixed beamforming design for each user~$k\in \set{K}$. As done in \cite{demir2021}, we choose the centralized MMSE solution \eqref{eq:CMMSE} by  assuming full power transmission, i.e., we choose the solution to \eqref{eq:maxSINR} for $\vec{p}=\vec{1}P$. The above problem can be solved via the normalized-fixed point iterations in Appendix~\ref{app:nuzman}, as in \cite{demir2021}, or in closed-form using \cite{miretti2022vtc}. 

Figure~\ref{fig:comparison_centr} reports the empirical CDF over $N_{\mathsf{CDF}}=100$ independent user drops of the resulting per-user rates given by \eqref{eq:coh}. As we can see, at the cost of increased complexity in solving the underlying optimization problem (but not in computing the beamformers), long-term joint power control and beamforming design may provide non-negligible performance gains with respect to long-term power control only. Perhaps more surprisingly, we also observe that long-term optimization may perform better than short-term optimization. In addition to performance gains, we recall that the long-term approach may also offer  significant gains in terms of computational complexity, since the optimization problem need not be solved for each channel realization as in the short-term approach.

The performance degradation of the short-term approach compared to the long-term approach may seem counterintuitive, as the former can optimally adapt uplink powers to the instantaneous CSI, whereas the latter cannot. However, this advantage can be (and often is) outweighed by the following disadvantage related to the max-min fairness criterion. As illustrated in Figure~\ref{fig:comparison_fluct}, the short-term approach assigns the same instantaneous rate to all users in each channel realization. In contrast, the long-term approach assigns only the same \textit{ergodic} rate to all users, allowing for greater flexibility in adapting instantaneous rates across users and realizations. For instance, long-term optimization is less sensitive to channel realizations with a single deep-fade event for a specific user, which would otherwise severely limit the instantaneous performance of all users in the short-term approach.

\subsection{Comparison of distributed cell-free schemes}
\label{ssec:distr}
\begin{figure}
		\centering
		\includegraphics[width=0.9\columnwidth]{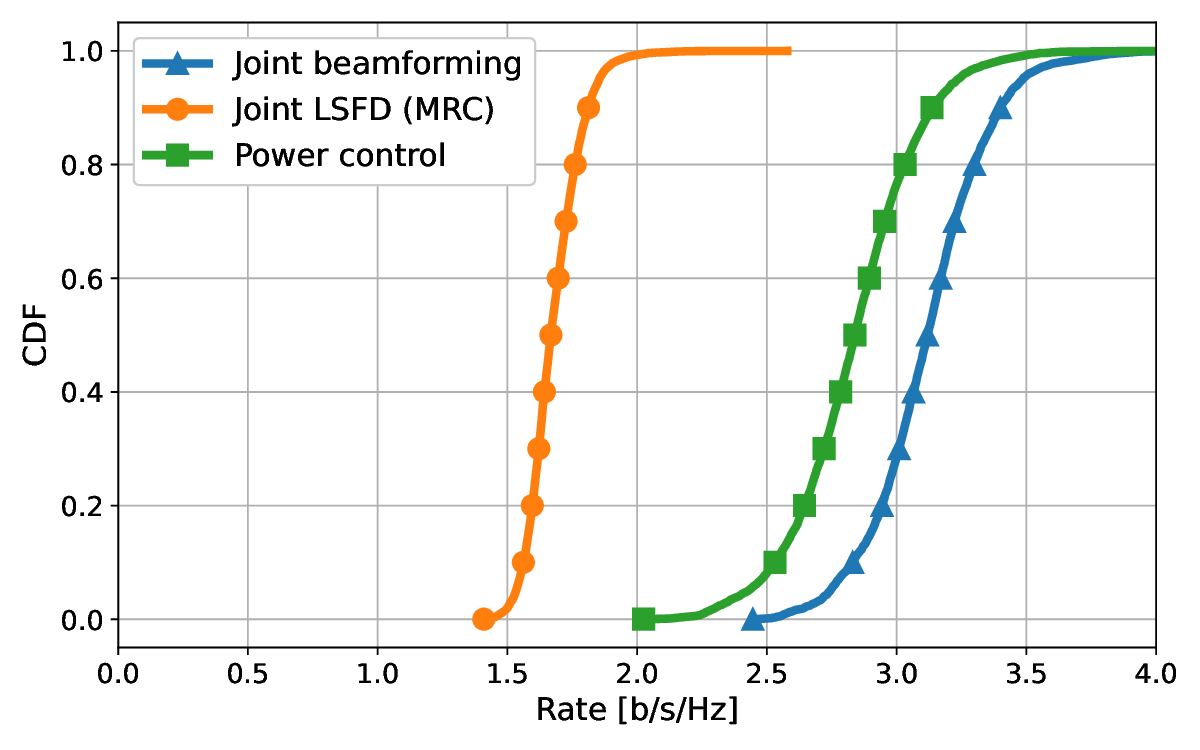}
		\caption{Comparison of distributed cell-free schemes. Empirical CDF of the ergodic rates \eqref{eq:coh} achieved by the proposed optimal solution to Problem~\eqref{eq:maxmin} under a distributed cell-free information constraint, and by the competing techniques in Section~\ref{ssec:distr}.}
		\label{fig:comparison_distr}
\end{figure}
We now compare different methods for long-term max-min fair system design in distributed cell-free networks. The short-term approach is not considered since, to the best of our knowledge, it cannot be easily generalized to optimal beamforming design under distributed information constraints. In particular, we compare the performance of the proposed algorithm for computing an optimal solution to Problem~\eqref{eq:maxmin} under a distributed information constraint against the following state-of-the-art techniques:
 
\subsubsection{Long-term power control with fixed beamforming design}
Similarly to the centralized information case, we use the normalized fixed-point iterations in Appendix~\ref{app:nuzman} as in \cite{demir2021}, or the closed-form expression in \cite{miretti2022vtc}, to solve
\begin{equation*}
	\begin{aligned}
		\underset{\substack{\vec{p}\in \stdset{R}_{+}^K}}{\text{maximize}}
		& \quad  \min_{k\in\set{K}} \mathsf{SINR}_k(\rvec{v}^{\mathsf{fix}}_k,\vec{p}) \\
		\text{subject to} & \quad \|\vec{p}\|_{\infty}\leq P,
	\end{aligned}
\end{equation*}
where $\rvec{v}^{\mathsf{fix}}_k$ denotes a fixed beamforming design for each user~$k\in \set{K}$ satisfying the distributed information constraints. We choose the optimal solution to \eqref{eq:maxSINR} for  $\vec{p}=\vec{1}P$, i.e., we choose the local team MMSE (LTMMSE) solution \eqref{eq:LTMMSE} assuming full power transmission.  

\subsubsection{Long-term joint power control and LSFD design}
Following the approach in \cite{bashar2019maxmin}, we solve 
\begin{equation*}
	\begin{aligned}
		\underset{\substack{\vec{p}\in \stdset{R}_{+}^K \\ \vec{a}_1,\ldots,\vec{a}_K \in \stdset{C}^{L} }}{\text{maximize}}
		& \quad  \min_{k\in\set{K}} \mathsf{SINR}_k(\mathrm{diag}(a_{1,k}\vec{I}_N,\ldots,a_{L,k}\vec{I}_N)\rvec{v}^{\mathsf{fix}}_k,\vec{p}) \\
		\text{subject to} & \quad \|\vec{p}\|_{\infty}\leq P,
	\end{aligned}
\end{equation*}
where $\rvec{v}^{\mathsf{fix}}_k$ denotes a fixed beamforming design for each user~$k\in \set{K}$, and $\vec{a}_k$ denotes the corresponding LSFD coefficients as defined in \cite{demir2021}. As in \cite{bashar2019maxmin}, we choose maximum ratio combining  (MRC) $\rvec{v}^{\mathsf{fix}}_k = \hat{\rvec{h}}_k$. Under this choice of beamformers, \cite{bashar2019maxmin} proves that the above problem can be solved via a block coordinate ascent algorithm, where the LSFD coefficients are updated in closed-form by maximizing a Rayleigh quotient, and where the power control vector is updated by solving a long-term power control problem with fixed beamforming design. However, we remark that the normalized fixed-point iterations in Appendix~\ref{app:nuzman} can also be used to solve the above problem as well as more general versions not limited to MRC. 

Figure~\ref{fig:comparison_distr} reports the performance of the considered methods for distributed cell-free networks. Similarly to Figure~\ref{fig:comparison_centr}, we observe that joint power control and beamforming design may provide non-negligible gains with respect to power control only, and  that the gap appears larger than in the centralized case. Furthermore, we notice that beamforming design is clearly the main limiting factor, since considering optimal LSFD design seems not sufficient to recover the large performance loss of MRC with respect to the local team MMSE scheme in \eqref{eq:LTMMSE}.

\subsection{Comparison of small-cells schemes}\label{ssec:cells}
\begin{figure}
		\centering
		\includegraphics[width=0.9\columnwidth]{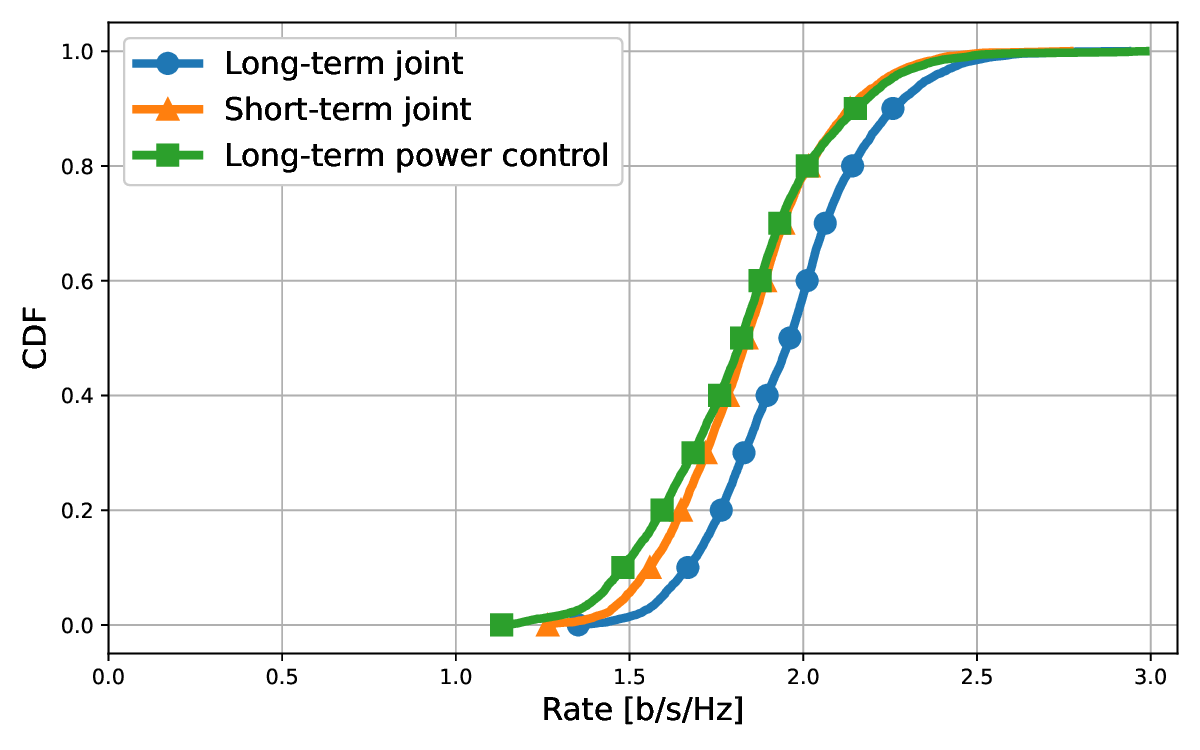}
		\caption{Comparison of small-cell schemes. Empirical CDF of the ergodic rates \eqref{eq:coh} achieved by the proposed optimal solution to Problem~\eqref{eq:maxmin} under a small-cells information constraint, and by the competing techniques in Section~\ref{ssec:centr} adapted to the case of $Q=1$.}
		\label{fig:comparison_smallcells}
\end{figure}
To conclude the case study, we compare different methods for max-min fair system design in the case of coordinated small-cells systems. In particular, we repeat the experiments in Section~\ref{ssec:centr} by adapting all the considered optimization problems and corresponding schemes to the case of $Q=1$ (each user is only served by its strongest AP). We recall that, for small-cells systems, no distinction is made between centralized and distributed beamforming since these notions relate to coherent joint processing of multiple APs.
 
The results are reported in Figure~\ref{fig:comparison_smallcells}. As for the centralized cell-free case, we observe that long-term optimization provides better performance while significantly reducing complexity and overhead. The benefits of the long-term schemes in terms of complexity and overhead is particularly evident for coordinated small-cells systems, since all beamformers can be computed locally based on local CSI, whereas all the coordination is based on slowly-varying global statistical information. In contrast, the short-term scheme still requires real-time sharing and processing of global instantaneous CSI. Finally, as expected, we observe that, at the cost of increased complexity, long-term joint power control and beamforming design performs better than long-term power control.

\section{Conclusion and future directions}
This study demonstrates that the axiomatic power control framework in \cite{yates95} \cite{nuzman07} and the beamforming design method in \cite{miretti2021team} can be combined to solve  challenging (infinite dimensional) two-timescale versions of the classical joint power control and beamforming problems \eqref{eq:QoS_det} and \eqref{eq:maxmin_det}, yielding novel optimal long-term optimization methods with favorable properties for modern large-scale MIMO systems. Due to the generality of our approach, the proposed algorithms could be extended in future works to more involved problems, e.g., by incorporating optimal user association as in \cite{schubert2019multi,liu2019association} or innovative information constraints as in \cite{miretti2021team,miretti2021team2,miretti2024delayed}. Finally, a point left open by this study is the evaluation of the convergence properties of alternative methods for the max-min fair criterion based on block-coordinate ascent similar to \cite{schubert2004solution} and \cite{bashar2019maxmin}. In fact, Proposition~\ref{prop:MSE} readily enables the derivation of a novel block-coordinate ascent algorithm for \eqref{eq:maxmin}. However, despite some promising experimental evidence \cite{miretti2022joint}, establishing its optimality may be nontrivial, since the arguments  in \cite{schubert2004solution} and \cite{bashar2019maxmin} seem not directly applicable.
		
\appendix[Axiomatic framework for power control]
\label{sec:power}
In this appendix we provide an up-to-date introduction to the axiomatic framework for power control in \cite{yates95,nuzman07}. In addition to offering a concise self-contained background for the main results of this study, this appendix may be of independent interest, since it uncovers overlooked connections within the existing literature. In more detail, we investigate networks consisting of $K$ users denoted by the set $\mathcal{K}:=\{1,\ldots,K\}$. To each user $k\in\mathcal{K}$ we associate an utility function $u_k:\real^K_{+}\to\real_{+}$, and we assume that all functions $(u_k)_{k\in\mathcal{K}}$ take a common argument $\vec{p}=(p_1,\ldots,p_K)\in\real_+^K$, which we interpret as the vector of powers of all users. We consider power control problems involving the above utilities and a given monotone norm $\|\cdot\|$, e.g., to impose power constraint of the type $(\forall \vec{p}\in\real_+^K)~\|\vec{p}\|\le P$, where $P$ is a given constant. To keep the utility optimization problems tractable while maintaining the framework reasonably general, we need to restrict the class of utility functions. To this end, we leverage the concept of standard interference functions, defined as follows: 

\begin{definition}
	\label{definition.mappings} \cite{yates95} A function $f:\real^K_+\to\real_{++}$ is said to be a \emph{standard interference function} if the following properties hold:  \begin{enumerate}[(i)]
		\item $(\forall \vec{p}\in\real^K_{+})(\forall \vec{q}\in\real^K_{+}) ~ \vec{p}\ge\vec{q} \Rightarrow f(\vec{p})\ge f(\vec{q})$; and \par 
		\item $(\forall \vec{p}\in\real^K_+)$ $(\forall \alpha>1)$  $\alpha {f}(\vec{p})>f(\alpha\vec{p})$. \par 
	\end{enumerate}
	Likewise, a mapping $T:\real_+^K\to\real_{++}^K:\vec{p}\mapsto (f_1(\vec{p}),\ldots,f_N(\vec{p}))$ is said to be a standard interference (SI) mapping if each coordinate function $f_k:\real^K_+\to\real_{++}$ ($k=1,\ldots,K$) is a standard interference function.
\end{definition}

\begin{remark}
	Standard interference functions $f:\real^K_+\to\real_{++}$ are continuous on $\real_{++}^K$, but they are not necessarily continuous on the boundary $\real^K_+\backslash \real_{++}^K$. Nevertheless, for all practical purposes, hereafter we can tacitly assume continuity. The reason is that the restriction  $f|_{\real_{++}^K}:\real^K_{++}\to\real_{++}:\vec{p}\mapsto f(\vec{p})$ of $f$ to ${\real_{++}^K}$ can be continuously extended to $\real_{+}^K$ \cite[Theorem~5.1.5]{Lemmens2012}. 
\end{remark}
Identifying standard interference functions directly from the definition is not necessarily easy. However, the next proposition provides a simple technique that can assist us in this task.
\begin{proposition} \label{proposition.concavity} Let $f:\real^K_+\to\real_{++}$ be a (positive) concave function (not necessarily everywhere continuous). Then $f$ is a standard interference function (see \cite[Proposition~1]{renato2016} for a simple proof of this standard result). 
\end{proposition}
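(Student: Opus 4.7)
The plan is to verify the two defining axioms of Definition~\ref{definition.mappings} separately, handling the scalability axiom (ii) first since it is a one-line consequence of concavity, and then reducing the monotonicity axiom (i) to a one-dimensional ray argument.

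For axiom (ii), I would fix $\signal{p}\in\real_+^K$ and $\alpha>1$, and express $\signal{p}$ as the convex combination $\signal{p} = \tfrac{1}{\alpha}(\alpha\signal{p}) + (1-\tfrac{1}{\alpha})\signal{0}$. Concavity then gives $f(\signal{p})\ge \tfrac{1}{\alpha}f(\alpha\signal{p}) + (1-\tfrac{1}{\alpha})f(\signal{0})$, and rearranging yields $f(\alpha\signal{p})\le \alpha f(\signal{p}) - (\alpha-1)f(\signal{0})$. Since $f$ takes values in $\real_{++}$, $f(\signal{0})>0$, so the correction term is strictly positive and the required strict inequality $\alpha f(\signal{p})>f(\alpha\signal{p})$ follows.

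For axiom (i), given $\signal{p}\le\signal{q}$ in $\real_+^K$, I would reduce to a one-dimensional problem by setting $g(t):=f(\signal{p}+t(\signal{q}-\signal{p}))$ for $t\ge 0$. The argument remains in $\real_+^K$ because $\signal{q}-\signal{p}\ge 0$, and $g$ inherits concavity and strict positivity on $\real_+$, with $g(0)=f(\signal{p})$ and $g(1)=f(\signal{q})$. I then rule out $g(0)>g(1)$ by contradiction: applying concavity to the triple of arguments $0,1,t$ with $t>1$ yields the secant-line bound $g(t)\le g(0)+t(g(1)-g(0))$, whose slope would then be negative and would force $g(t)\to-\infty$ as $t\to\infty$, contradicting $g>0$. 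Hence $g(0)\le g(1)$, which is exactly $f(\signal{p})\le f(\signal{q})$.

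The main obstacle is precisely this last contradiction, and it is the only place where strict positivity of $f$ is used in an essential way. Concavity alone does not imply monotonicity on $\real_+^K$ (the function $\signal{p}\mapsto 1-\|\signal{p}\|^2$ is concave but not monotone), so strict positivity must enter decisively, and the unbounded ray is the natural vehicle for exploiting it. The plan never relies on continuity of $f$ at boundary points, so the mild regularity acknowledged in the proposition statement is not an obstacle.
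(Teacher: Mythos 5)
Your proposal is correct, and it takes essentially the same route as the standard proof the paper points to in \cite[Proposition~1]{renato2016}: strict scalability via writing $\signal{p}$ as a convex combination of $\alpha\signal{p}$ and $\signal{0}$ and using $f(\signal{0})>0$, and monotonicity by noting that a concave function which decreases along a ray contained in $\real_+^K$ must eventually become negative, contradicting positivity. Both steps, including the observation that continuity is never needed, check out.
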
 

For later reference, we say that a mapping $T:\real_+^K\to\real_{++}^K$ is a positive concave mapping if it can be written as $(\forall\vec{p}\in\real_+^K)~ T(\vec{p})=(f_1(\vec{p}), \ldots, f_K(\vec{p}))$, where, for every $k\in\set{K}$, $f_k:\real^K_+\to\real_{++}$ is a (positive) concave function. We now have in place all the required definitions to describe the utility functions covered by the framework.

\begin{assumption}
	\label{assumption.restrictions}
	The utility function $u_k$ of each user $k\in\mathcal{K}$ takes the form  
	$
		u_k:\real_+^K\to\real_+:\vec{p}\mapsto\dfrac{p_k}{f_k(\vec{p})},
	$
	where $f_k:\real_+^K\to\real_{++}$ is a \emph{continuous} standard interference function. 
\end{assumption}

\subsection{Minimization of the sum transmit power}
\label{app:yates}
In this section, we describe the utility optimization problems first considered in the wireless domain in the seminal study in \cite{yates95}. We also summarize  recent results and connections to fixed point theory in Thompson metric spaces. These connections seem to be largely ignored in the wireless literature, but, as shown in \cite{nuzman07,Piotrowski2022}, their knowledge enables us to (i) gain powerful mathematical machinery to answer many questions related to power control problems in a unified way and to (ii) give alternative proofs of some results in \cite{yates95}. In more detail, the first power control problem we describe has the objective of minimizing the sum transmit power subject to minimum utility requirements. More precisely, we consider problems of the form in \eqref{eq:yates}, where the utility functions $(u_k)_{k\in\mathcal{K}}$ satisfy Assumption~\ref{assumption.restrictions}. We now proceed to review answers to three important questions related to the optimization problem in \refeq{eq:yates}:
(Q1) Does a solution exist? (Q2) Is the solution unique? (Q3) Are there simple algorithms able to compute the solution?

To answer the above questions, we first reformulate \refeq{eq:yates} as a fixed point problem. To this end, we can use well-known properties of the utility functions satisfying Assumption~\ref{assumption.restrictions}  (see, for example, \cite[Lemma~1]{cavalcante2023})  and standard arguments in the literature to verify that, if a solution to \refeq{eq:yates} exists, all constraints have to be satisfied with equality. Therefore, assuming $\vec{p}^\star=(p_1^\star,\ldots,p_K^\star)$ to be a solution to \refeq{eq:yates}, we deduce from Assumption~\ref{assumption.restrictions}:
$
(\forall k\in\mathcal{K})~ u_k(\vec{p}^\star) = \gamma_k \Leftrightarrow (\forall k\in\mathcal{K})~p_k^\star = \gamma_k f_k(\vec{p}^\star) \Leftrightarrow \vec{p}^\star \in\mathrm{Fix}(T)$,
where $T:\real_{+}^K\to\real_{++}^K:\vec{x}\mapsto (\gamma_1 f_1(\vec{x}), \ldots, \gamma_K f_K(\vec{x}))$ is a standard interference mapping. 
These relations reveal that properties of  solutions to \refeq{eq:yates} are connected to properties of the fixed point set of the standard interference mapping $T$. 

Recalling that standard interference mappings are only guaranteed to be contractive in the cone $\real_{++}^K$ with respect to Thompson's metric \cite[Lemma~2.1.7]{Lemmens2012}\cite{nuzman07} (but not necessarily Lipschitz contractions),  we conclude that $\mathrm{Fix}(T)$ is either a singleton or the empty set \cite[Ch.~4.1]{krause2015}, so we now have a complete answer to questions (Q1) and (Q2): the solution to the problem in \refeq{eq:yates} is unique \emph{if it exists}. Furthermore, this answer enables us to reformulate \refeq{eq:yates} equivalently as follows:
\begin{align}
	\label{eq.fixed_point_yates}
	\text{Find }\vec{p}^\star\in\real_{++}^K\text{ such that } \vec{p}^\star\in \mathrm{Fix}(T),
\end{align}
and we refer readers to \cite[Proposition~4]{renato2019} for a necessary and sufficient condition for the existence of the fixed point of $T$, and, hence, for the existence of the necessarily unique solution to \refeq{eq:yates}. This result in \cite[Proposition~4]{renato2019} can frequently simplify the analysis of existence of solutions to \refeq{eq:yates} \cite{renato2019}.

An additional advantage of the reformulation in \refeq{eq.fixed_point_yates}, which is a standard fixed point problem involving contractive mappings in Thompson metric space, is that we may conjecture that the sequence $(\vec{p}_n)_{n\in\Natural}$  generated via 
\begin{align}
	\label{eq.banach_picard}
	\vec{p}_{n+1}=T(\vec{p}_n),
\end{align}
converges for any starting point $\vec{p}_1\in\real_+^K$ (assuming that $\mathrm{Fix}(T)\neq\emptyset$), and this conjecture can be settled by using basic arguments in analysis \cite{yates95} or by using  arguments based on fixed point theory in metric spaces \cite{nuzman07}. For reference, we summarize the above results below.

\begin{proposition}\label{prop:convergence_yates} \cite{yates95,nuzman07}
	Let $T:\real_+^K\to\real_{++}^K$ be a standard interference mapping. Then $\mathrm{Fix}(T)$ is a singleton if not empty. Furthermore, if $\emptyset\neq\mathrm{Fix}(T)\ni\{\vec{p}^\star\}$, then the sequence $(\vec{p}_n)_{n\in\Natural}$ generated via \refeq{eq.banach_picard} converges in norm to $\vec{p}^\star$ for any starting point $\vec{p}_1\in\real_+^K$.
\end{proposition}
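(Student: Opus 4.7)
My plan is to prove the two claims separately using only the defining properties in Definition~\ref{definition.mappings} together with the continuity of $T$ that is tacitly assumed via Assumption~\ref{assumption.restrictions}; no metric fixed-point machinery will be required. The key device is a classical Yates-style monotone-sandwich argument exploiting the strict sub-homogeneity $T(\alpha\signal{p})<\alpha T(\signal{p})$ for $\alpha>1$ jointly with monotonicity.

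For uniqueness, I would take two fixed points $\signal{p}^\star,\signal{q}^\star\in\mathrm{Fix}(T)$, which necessarily lie in $\real_{++}^K$ since $T$ is strictly positive, and set $\alpha^\star:=\inf\{\alpha\ge 1\mid\alpha\signal{p}^\star\ge\signal{q}^\star\}$. The infimum is finite (because $\signal{p}^\star\in\real_{++}^K$) and attained (the constraint defines a closed subset of $[1,\infty)$). Assuming $\alpha^\star>1$ for contradiction, combining monotonicity and scalability gives
\[
\signal{q}^\star=T(\signal{q}^\star)\le T(\alpha^\star\signal{p}^\star)<\alpha^\star T(\signal{p}^\star)=\alpha^\star\signal{p}^\star,
\]
where the middle inequality is coordinate-wise strict. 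Since $\signal{p}^\star\in\real_{++}^K$, this slack allows me to pick $\epsilon>0$ with $\alpha^\star-\epsilon\ge 1$ and $(\alpha^\star-\epsilon)\signal{p}^\star\ge\signal{q}^\star$, contradicting the definition of $\alpha^\star$. Hence $\alpha^\star=1$, so $\signal{p}^\star\ge\signal{q}^\star$; swapping the roles of $\signal{p}^\star$ and $\signal{q}^\star$ yields equality.

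For convergence, I fix $\signal{p}^\star\in\mathrm{Fix}(T)$ and sandwich the orbit of an arbitrary $\signal{p}_1\in\real_+^K$ between two monotone envelopes. Let $(\signal{a}_n)$ be the orbit of $\signal{0}$: since $T$ is strictly positive one has $\signal{a}_2=T(\signal{0})\ge\signal{0}=\signal{a}_1$, and monotonicity of $T$ propagates this inductively to $\signal{a}_n\le\signal{a}_{n+1}\le\signal{p}^\star$ for every $n$. Next, pick $\alpha>1$ large enough that $\alpha\signal{p}^\star\ge\signal{p}_1$---possible since $\signal{p}^\star\in\real_{++}^K$---and let $(\signal{b}_n)$ be the orbit of $\alpha\signal{p}^\star$: scalability gives $\signal{b}_2=T(\alpha\signal{p}^\star)<\alpha\signal{p}^\star=\signal{b}_1$, and monotonicity propagates this to a decreasing sequence with $\signal{b}_n\ge\signal{p}^\star$. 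Both envelopes are monotone and bounded, hence convergent; passing to the limit in the recursions via continuity of $T$, their limits are fixed points, and by the uniqueness just established both must equal $\signal{p}^\star$. Since $\signal{0}\le\signal{p}_1\le\alpha\signal{p}^\star$, iterating monotonicity of $T$ yields the sandwich $\signal{a}_n\le\signal{p}_n\le\signal{b}_n$ for all $n$, so $(\signal{p}_n)$ converges in norm to $\signal{p}^\star$.

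The main obstacle is conceptual: $T$ is generally not a Lipschitz contraction in any norm on $\real_+^K$, so Banach's principle is not directly applicable and convergence has to be extracted from the order-theoretic structure. The strict sub-homogeneity axiom plays the role of the missing contraction property and is precisely what activates the envelope argument---equivalently, it is what makes $T$ a (non-strict) contraction in Thompson's part metric on $\real_{++}^K$, which is the route underlying the alternative proof in \cite{nuzman07}.
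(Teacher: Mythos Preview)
Your proof is correct and follows the classical Yates-style order-theoretic argument, which is precisely one of the two routes the paper cites for this result (the paper does not supply its own detailed proof, merely referring to \cite{yates95} for the elementary monotone-sandwich argument and to \cite{nuzman07} for the Thompson-metric contraction viewpoint). The paper's surrounding discussion leans on the Thompson-metric route for the uniqueness part, but your self-contained derivation via the minimal scaling factor $\alpha^\star$ and the two monotone envelopes is exactly what the paper attributes to \cite{yates95}, and you correctly note the equivalence with the metric approach in your closing paragraph.
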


One potential limitation of the fixed point iteration in \refeq{eq.banach_picard} is that it can be slow. Indeed, \cite[Example~2]{fey2012} shows that convergence can be sublinear in general. Nevertheless, if the standard interference functions $(f_k)_{k\in\mathcal{K}}$ are further restricted to be positive and concave, as in this study, then the convergence is guaranteed to be geometric \cite{Piotrowski2022} (in the sense of \cite[Definition~1]{Piotrowski2022}) and no better than geometric in general \cite[Proposition~5]{cavalcante2018spectral}. In this case, we can further improve the convergence speed with, for example, the acceleration technique described in \cite{renato2016} or with the more computationally intensive but faster approach in \cite[Theorem~11]{boche2008}. (For geometric convergence of fixed point iterations of standard interference mappings that are not necessarily concave, we refer the readers to \cite{cavalcante2018spectral}.)

\subsection{Weighted max-min fair power control}
\label{app:nuzman}
In this section, we consider optimization problems that have the objective of maximizing the minimum weighted utility across all users subject to power constraints. More formally, we consider power control problems of the form in \eqref{eq:nuzman}, by assuming again that the utility functions $(u_k)_{k\in\mathcal{K}}$ satisfy Assumption~\ref{assumption.restrictions}. 

As shown in the next proposition, the optimization problem in \refeq{eq:nuzman} always has a solution that can be obtained with fixed point iterations involving mappings that are contractive in a (compact) Hilbert-projective metric space. To the best of our knowledge, in the wireless domain, this connection to fixed point theory has been first established in \cite{nuzman07}, and, with slightly different assumptions on the utility functions and/or the power constraints, these fixed point algorithms have also been independently proposed in \cite{cai2011maxmin,tan2014wireless,zheng2016} \cite[Ch.~4]{tan2015wireless}, which are studies building upon the results in \cite{krause1986perron}. We emphasize that, unlike the properties imposed on the utility functions in some of these previous studies, the properties in Definition~\ref{definition.mappings} are not enough to guarantee uniqueness of the solution to \refeq{eq:nuzman}, so the characterization in Proposition~\ref{prop:fixed_point}(iii) is important. We omit the proof of Proposition~\ref{prop:fixed_point} because it is immediate from \cite[Proposition~2]{cavalcante2023}, which builds upon \cite{nuzman07}, and it can be also obtained with the results in \cite{krause1986perron}.

\begin{proposition}\label{prop:fixed_point}
	Consider the problem in \refeq{eq:nuzman}, and recall that $(\forall k\in\mathcal{K})(\forall \vec{p}\in\real^K_+)~u_k(\vec{p})=p_k/f_k(\vec{p})$ (Assumption~\ref{assumption.restrictions}). Let $T:\stdset{R}_+^K\to \stdset{R}_{++}^K$ be the standard interference mappings defined by $(\forall\vec{p}\in\real_{++}^K)~T(\vec{p}):=(\gamma_1f_1(\vec{p}), \ldots,\gamma_Kf_K(\vec{p}))$. Then each of the following holds:
	\begin{enumerate}[(i)]
		\item The set $\mathrm{Fix}(\tilde{T})$ of fixed points  of the mapping 
		$
			\tilde{T}:  \stdset{R}_{+}^K \to \stdset{R}_{++}^K  : \vec{p} \mapsto \frac{P}{\|T(\vec{p})\|}T(\vec{p})
		$
		is a singleton, and  its unique fixed point $\vec{p}^\star\in\mathrm{Fix}(\tilde{T})$  is a (not necessarily unique) solution to \eqref{eq:nuzman}.
		\item For every $\vec{p}_1 \in \stdset{R}_{+}^K$, the sequence $(\vec{p}_n)_{n\in\stdset{N}}$ generated via 
		\begin{equation}
			\label{eq:italg}
			(\forall n \in \stdset{N})~\vec{p}_{n+1} = \tilde{T}(\vec{p}_n)
		\end{equation}
		converges in norm to the fixed point $\vec{p}^\star\in\mathrm{Fix}(\tilde{T})$.
		\item Denote by $\set{P}\subset\stdset{R}_{++}^K$ the nonempty set of solutions to the problem in~\eqref{eq:nuzman}. Then the fixed point $\vec{p}^\star\in\mathrm{Fix}(\tilde{T})$ satisfies $(\forall \vec{p}\in \set{P})~\vec{p}\ge \vec{p}^\star$.
	\end{enumerate}
\end{proposition}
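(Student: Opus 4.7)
The overall plan is to recognize that $\tilde{T}$ is a scale-invariant renormalization of the standard interference mapping $T$, and to exploit the fact that such renormalizations become strict contractions in Hilbert's projective metric even though $T$ itself is only non-expansive in Thompson's metric. As a first step, I would verify via Proposition~\ref{proposition.preserve}(i) that $T=(\gamma_1 f_1,\ldots,\gamma_K f_K)$ is a standard interference mapping. Because $\|\cdot\|$ is monotone and $T$ takes values in $\real_{++}^K$, the map $\tilde{T}$ is well-defined on $\real_+^K$, its image lies in the ``sphere'' $\Sigma_P:=\{\vec{p}\in\real_{++}^K : \|\vec{p}\|=P\}$, and a one-line calculation gives $\tilde{T}(\lambda\vec{p})=\tilde{T}(\vec{p})$ for every $\lambda>0$, so $\tilde{T}$ descends to the projective cone.

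For claim (i), I would establish existence of a fixed point by a Brouwer-type argument on a compact convex subset of $\Sigma_P$ (exploiting continuity of $T$ on $\real_{++}^K$ together with the fact that $\tilde{T}$ leaves $\Sigma_P$ invariant), and then use the sub-homogeneity axiom (ii) of Definition~\ref{definition.mappings} to rule out two distinct fixed points: any two fixed points would have to lie on the same projective ray by the monotonicity/sub-homogeneity dichotomy, but strict sub-homogeneity forbids this. To link the fixed point to \eqref{eq:nuzman}, I would observe that $\tilde{T}(\vec{p}^\star)=\vec{p}^\star$ forces the equalization $\gamma_k^{-1}u_k(\vec{p}^\star)=P/\|T(\vec{p}^\star)\|$ for every $k\in\set{K}$, and then argue that no feasible $\vec{p}$ with $\|\vec{p}\|\le P$ can raise this common value simultaneously for all coordinates, using the monotonicity of the norm and of $T$.

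For claim (ii), the plan is to show that $\tilde{T}$ is a strict contraction in Hilbert's projective metric on the positive cone (again leveraging the strict inequality in condition (ii) of Definition~\ref{definition.mappings}), so that Banach's fixed point theorem yields geometric convergence of the iterates to $\vec{p}^\star$ in this metric. Since all iterates starting from $n=2$ live on the compact set $\Sigma_P$ (because $T$ maps into $\real_{++}^K$), Hilbert-projective convergence on $\Sigma_P$ upgrades to norm convergence by equivalence of the two metrics on compact subsets of the open cone. Initial points $\vec{p}_1\in\real_+^K\setminus\real_{++}^K$ are handled by noting that $\vec{p}_2=\tilde{T}(\vec{p}_1)\in\Sigma_P\subset\real_{++}^K$, so the contraction argument applies from the second iterate on.

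For claim (iii), I would take any $\vec{p}\in\set{P}$ and define $\lambda:=\min_k p_k/p_k^\star$. Monotonicity and sub-homogeneity of $T$, combined with the equalization identity at $\vec{p}^\star$ and the optimality $\min_k \gamma_k^{-1}u_k(\vec{p})\ge P/\|T(\vec{p}^\star)\|$, would yield $\lambda\ge 1$, i.e., $\vec{p}\ge \vec{p}^\star$. The principal obstacle throughout is the existence/uniqueness step in (i): standard interference mappings are generically only non-expansive in Thompson's metric, so Banach's theorem does not apply to $T$ itself; the entire argument hinges on passing to the projective quotient, where strict sub-homogeneity promotes non-expansivity into strict contractivity, and on carefully handling boundary points of $\real_+^K$ where $f_k$ may fail to be continuous.
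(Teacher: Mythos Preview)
The paper does not give its own proof; it simply cites \cite[Proposition~2]{cavalcante2023}, \cite{nuzman07}, and \cite{krause1986perron}. Your overall strategy --- pass to the projective quotient, exploit monotonicity and strict sub-homogeneity, and argue via a metric contraction --- is indeed the strategy underlying those references, so at a high level you are on the right track. However, two steps in your sketch do not go through as written.

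\emph{Existence via Brouwer.} You propose to apply Brouwer on ``a compact convex subset of $\Sigma_P$''. The set $\Sigma_P=\{\vec{p}\in\real_{++}^K:\|\vec{p}\|=P\}$ is in general neither compact (it lies in the open cone and misses the boundary faces) nor convex (e.g., for $\|\cdot\|_\infty$ it is a portion of the boundary of a cube). The standard fix is to work instead on a compact convex set \emph{containing} the image of $\tilde{T}$, for instance the closed simplex $\{\vec{p}\in\real_+^K:\|\vec{p}\|_1=P\}$ after reparametrizing rays, or $\mathrm{conv}(\overline{\Sigma_P})$, and to use that $\tilde{T}$ maps this set into $\Sigma_P$, which is contained in its interior. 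This is routine, but your current phrasing is the wrong way around.

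\emph{Convergence via Banach.} You claim that strict sub-homogeneity makes $\tilde{T}$ a strict contraction in Hilbert's projective metric, and then invoke Banach's fixed point theorem to obtain geometric convergence. For general standard interference mappings this is false: strict sub-homogeneity yields only $d_H(\tilde{T}(\vec{x}),\tilde{T}(\vec{y}))<d_H(\vec{x},\vec{y})$ whenever $\vec{x},\vec{y}$ lie on distinct rays, not a uniform Lipschitz constant strictly below one. Banach's theorem therefore does not apply directly, and the paper itself emphasizes (immediately after the proposition) that geometric convergence is guaranteed only under the additional hypothesis that $T$ is positive concave. The convergence in (ii) for arbitrary standard interference mappings follows instead from an Edelstein-type argument (weak contraction plus compactness of the iterates on $\overline{\Sigma_P}$), or from the monotone-sequence argument in \cite{nuzman07}. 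Your sketch should replace the Banach step accordingly, and drop the unqualified claim of geometric convergence. Your argument for (iii) is essentially correct.
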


Similarly to the fixed point iteration in \refeq{eq.banach_picard}, the fixed point algorithm in \refeq{eq:italg} also converges geometrically (in the sense of \cite[Definition~1]{Piotrowski2022}) if the mapping $T:\real^K_+\to\real_{++}^K$ is a (positive) concave mapping \cite[Remark~2.1.12]{krause2015}.
	
\bibliographystyle{IEEEbib}
\bibliography{IEEEabrv,refs}

\begin{thebibliography}{10}

\bibitem{marzetta2016fundamentals}
T.~L. Marzetta, E.~G. Larsson, H.~Yang, and H.~Q. Ngo,
\newblock {\em Fundamentals of massive {MIMO}},
\newblock Cambridge {U}niversity {P}ress, 2016.

\bibitem{massivemimobook}
E.~Bj\"{o}rnson, J.~Hoydis, and L.~Sanguinetti,
\newblock ``Massive {MIMO} networks: {Spectral}, energy, and hardware
  efficiency,''
\newblock {\em Foundations and Trends{\textregistered} in Signal Processing},
  vol. 11, no. 3-4, pp. 154--655, 2017.

\bibitem{ngo2017cell}
H.~Q. Ngo, A.~Ashikhmin, H.~Yang, E.~G. Larsson, and T.~L. Marzetta,
\newblock ``Cell-free massive {MIMO} versus small cells,''
\newblock {\em {IEEE} Trans. Wireless Commun.}, vol. 16, no. 3, pp. 1834--1850,
  Mar. 2017.

\bibitem{demir2021}
{\"O}.~T. Demir, E.~Bj{\"o}rnson, and L.~Sanguinetti,
\newblock ``Foundations of user-centric cell-free massive {MIMO},''
\newblock {\em Foundations and Trends in Signal Processing}, vol. 14, pp.
  162--472, 2021.

\bibitem{debbah2023xl}
Z.~Wang, J.~Zhang, H.~Du, W.~E.~I. Sha, B.~Ai, D.~Niyato, and M.~Debbah,
\newblock ``Extremely large-scale {MIMO}: Fundamentals, challenges, solutions,
  and future directions,''
\newblock {\em IEEE Wireless Commun.}, pp. 1--9, 2023.

\bibitem{nayebi2017precoding}
E.~{Nayebi}, A.~{Ashikhmin}, T.~L. {Marzetta}, H.~{Yang}, and B.~D. {Rao},
\newblock ``Precoding and power optimization in cell-free massive {MIMO}
  systems,''
\newblock {\em {IEEE} Trans. Wireless Commun.}, vol. 16, no. 7, pp. 4445--4459,
  July 2017.

\bibitem{bashar2019maxmin}
M.~Bashar, K.~Cumanan, A.~G. Burr, M.~Debbah, and H.~Q. Ngo,
\newblock ``On the uplink max–min {SINR} of cell-free massive {MIMO}
  systems,''
\newblock {\em {IEEE} Trans. Wireless Commun.}, vol. 18, no. 4, pp. 2021--2036,
  Apr. 2019.

\bibitem{bjornson2019making}
E.~Bj{\"o}rnson and L.~Sanguinetti,
\newblock ``Making cell-free massive {MIMO} competitive with {MMSE} processing
  and centralized implementation,''
\newblock {\em {IEEE} Trans. Wireless Commun.}, vol. 19, no. 1, pp. 77--90,
  2019.

\bibitem{du2021cellfree}
L.~Du, L.~Li, H.~Q. Ngo, T.~C. Mai, and M.~Matthaiou,
\newblock ``Cell-free massive {MIMO}: Joint maximum-ratio and zero-forcing
  precoder with power control,''
\newblock {\em {IEEE} Trans. Commun.}, vol. 69, no. 6, pp. 3741--3756, June
  2021.

\bibitem{interdonato2021enhanced}
G.~Interdonato, H.~Q. Ngo, and E.~G. Larsson,
\newblock ``Enhanced normalized conjugate beamforming for cell-free massive
  {MIMO},''
\newblock {\em {IEEE} Trans. Commun.}, vol. 69, no. 5, pp. 2863--2877, May
  2021.

\bibitem{rashid1998joint}
F.~Rashid-Farrokhi, L.~Tassiulas, and K.J.~Ray Liu,
\newblock ``Joint optimal power control and beamforming in wireless networks
  using antenna arrays,''
\newblock {\em {IEEE} Trans. Commun.}, vol. 46, no. 10, pp. 1313--1324, Oct.
  1998.

\bibitem{schubert2004solution}
M.~Schubert and H.~Boche,
\newblock ``Solution of the multiuser downlink beamforming problem with
  individual {SINR} constraints,''
\newblock {\em {IEEE} Trans. Veh. Technol.}, vol. 53, no. 1, pp. 18--28, Jan.
  2004.

\bibitem{wiesel2006linear}
A.~Wiesel, Y.C. Eldar, and S.~Shamai,
\newblock ``Linear precoding via conic optimization for fixed {MIMO}
  receivers,''
\newblock {\em {IEEE} Trans. Signal Processing}, vol. 54, no. 1, pp. 161--176,
  Jan. 2006.

\bibitem{yu2007transmitter}
W.~Yu and T.~Lan,
\newblock ``Transmitter optimization for the multi-antenna downlink with
  per-antenna power constraints,''
\newblock {\em IEEE Trans. Signal Processing}, vol. 55, no. 6, pp. 2646--2660,
  June 2007.

\bibitem{cai2011maxmin}
D.~W.~H. Cai, T.~Q.~S. Quek, and C.~W. Tan,
\newblock ``A unified analysis of max-min weighted {SINR} for {MIMO} downlink
  system,''
\newblock {\em {IEEE} Trans. Signal Processing}, vol. 59, no. 8, pp.
  3850--3862, Aug. 2011.

\bibitem{yates95}
R.~D. Yates,
\newblock ``A framework for uplink power control in cellular radio systems,''
\newblock {\em {IEEE} J. Select. Areas Commun.}, vol. 13, no. 7, pp. pp.
  1341--1348, Sept. 1995.

\bibitem{nuzman07}
C.~J. Nuzman,
\newblock ``Contraction approach to power control, with non-monotonic
  applications,''
\newblock in {\em Proc. {IEEE} Global Conf. Communications {(GLOBECOM)}}. IEEE,
  2007, pp. 5283--5287.

\bibitem{schubert2011interference}
M.~Schubert and H.~Boche,
\newblock {\em Interference calculus: A general framework for interference
  management and network utility optimization}, vol.~7,
\newblock Springer Science \& Business Media, 2011.

\bibitem{renato2016}
R.~L.~G. Cavalcante, Y.~Shen, and S.~Sta{\'n}czak,
\newblock ``Elementary properties of positive concave mappings with
  applications to network planning and optimization,''
\newblock {\em {IEEE} Trans. Signal Processing}, vol. 64, no. 7, pp.
  1774--1873, April 2016.

\bibitem{renato2019}
R.~L.~G. Cavalcante, Q.~Liao, and S.~Sta{\'n}czak,
\newblock ``Connections between spectral properties of asymptotic mappings and
  solutions to wireless network problems,''
\newblock {\em {IEEE} Trans. Signal Processing}, vol. 67, no. 10, pp.
  2747--2760, Oct. 2019.

\bibitem{cai2012comp}
D.~W.~H. Cai, T.~Q.~S. Quek, C.~W. Tan, and S.~H. Low,
\newblock ``Max-min {SINR} coordinated multipoint downlink
  transmission—duality and algorithms,''
\newblock {\em {IEEE} Trans. Signal Processing}, vol. 60, no. 10, pp.
  5384--5395, Oct. 2012.

\bibitem{schubert2019multi}
M.~Schubert, R.~Bohnke, and W.~Xu,
\newblock ``Multi-connectivity beamforming for enhanced reliability and massive
  access,''
\newblock in {\em European Conf. on Networks and Communications (EUCNC)}, 2019.

\bibitem{liu2019association}
W.~Liu, R.~Sun, and Z.-Q. Luo,
\newblock ``Globally optimal joint uplink base station association and
  beamforming,''
\newblock {\em {IEEE} Trans. Commun.}, vol. 67, no. 9, pp. 6456--6467, Sept.
  2019.

\bibitem{luo2008dynamic}
Z.-Q. Luo and S.~Zhang,
\newblock ``Dynamic spectrum management: Complexity and duality,''
\newblock {\em IEEE J. Sel. Topics Signal Process.}, vol. 2, no. 1, pp. 57--73,
  2008.

\bibitem{utschick2012monotonic}
W.~Utschick and J.~Brehmer,
\newblock ``Monotonic optimization framework for coordinated beamforming in
  multicell networks,''
\newblock {\em IEEE Trans. Signal Processing}, vol. 60, no. 4, pp. 1899--1909,
  2012.

\bibitem{emil2013resource}
E.~Bj{\"o}rnson and E.~Jorswieck,
\newblock {\em Optimal resource allocation in coordinated multi-cell systems},
\newblock Now Publishers Inc, 2013.

\bibitem{christensen2008wmmse}
S.~S. Christensen, R.~Agarwal, E.~De~Carvalho, and J.~M. Cioffi,
\newblock ``Weighted sum-rate maximization using weighted {MMSE} for {MIMO-BC}
  beamforming design,''
\newblock {\em IEEE Trans. Wireless Commun.}, vol. 7, no. 12, pp. 4792--4799,
  2008.

\bibitem{shen2018fractional}
K.~Shen and W.~Yu,
\newblock ``Fractional programming for communication systems—part {I}: Power
  control and beamforming,''
\newblock {\em IEEE Trans. Signal Processing}, vol. 66, no. 10, pp. 2616--2630,
  2018.

\bibitem{miretti2021team}
L.~Miretti, E.~Björnson, and D.~Gesbert,
\newblock ``Team {MMSE} precoding with applications to cell-free massive
  {MIMO},''
\newblock {\em {IEEE} Trans. Wireless Commun.}, vol. 21, no. 8, pp. 6242--6255,
  Aug. 2022.

\bibitem{buzzi2020}
S.~Buzzi, C.~D’Andrea, A.~Zappone, and C.~D’Elia,
\newblock ``User-centric {5G} cellular networks: Resource allocation and
  comparison with the cell-free massive {MIMO} approach,''
\newblock {\em {IEEE} Trans. Wireless Commun.}, vol. 19, no. 2, pp. 1250--1264,
  Feb. 2020.

\bibitem{miretti2022joint}
L.~Miretti, R.~L.~G. Cavalcante, and S.~Stańczak,
\newblock ``Joint optimal beamforming and power control in cell-free massive
  {MIMO},''
\newblock in {\em Proc. {IEEE} Global Conf. Communications {(GLOBECOM)}}, 2022,
  pp. 770--775.

\bibitem{caire2018ergodic}
G.~{Caire},
\newblock ``On the ergodic rate lower bounds with applications to massive
  {MIMO},''
\newblock {\em {IEEE} Trans. Wireless Commun.}, vol. 17, no. 5, pp. 3258--3268,
  May 2018.

\bibitem{gottsch2023subspace}
F.~G{\"o}ttsch, N.~Osawa, T.~Ohseki, K.~Yamazaki, and G.~Caire,
\newblock ``Subspace-based pilot decontamination in user-centric scalable
  cell-free wireless networks,''
\newblock {\em {IEEE} Trans. Wireless Commun.}, vol. 22, no. 6, pp. 4117--4131,
  June 2023.

\bibitem{sun2018learning}
H.~Sun, X.~Chen, Q.~Shi, M.~Hong, X.~Fu, and N.~D. Sidiropoulos,
\newblock ``Learning to optimize: Training deep neural networks for
  interference management,''
\newblock {\em IEEE Trans. Signal Processing}, vol. 66, no. 20, pp. 5438--5453,
  2018.

\bibitem{dandrea2019uplink}
C.~D'Andrea, A.~Zappone, S.~Buzzi, and M.~Debbah,
\newblock ``Uplink power control in cell-free massive {MIMO} via deep
  learning,''
\newblock in {\em 2019 IEEE 8th International Workshop on Computational
  Advances in Multi-Sensor Adaptive Processing (CAMSAP)}, 2019, pp. 554--558.

\bibitem{xia2020deep}
W.~Xia, G.~Zheng, Y.~Zhu, J.~Zhang, J.~Wang, and A.~P. Petropulu,
\newblock ``A deep learning framework for optimization of {MISO} downlink
  beamforming,''
\newblock {\em IEEE Trans. Commun.}, vol. 68, no. 3, pp. 1866--1880, 2020.

\bibitem{garcia2024flexible}
G.~García-Barrios, M.~Fuentes, and D.~Martín-Sacristán,
\newblock ``A flexible low-complexity {DNN} solution for power control in
  cell-free massive {MIMO},''
\newblock in {\em 2024 IEEE 35th International Symposium on Personal, Indoor
  and Mobile Radio Communications (PIMRC)}, 2024, pp. 1--6.

\bibitem{vahapoglu2024deep}
C.~Vahapoglu, T.~J. O’Shea, T.~Roy, and S.~Ulukus,
\newblock ``Deep learning based uplink multi-user {SIMO} beamforming design,''
\newblock in {\em 2024 IEEE International Conference on Machine Learning for
  Communication and Networking (ICMLCN)}, 2024, pp. 329--333.

\bibitem{yukselbook}
S.~Y{\"u}ksel and T.~Ba{\c{s}}ar,
\newblock {\em Stochastic networked control systems: Stabilization and
  optimization under information constraints},
\newblock Springer Science \& Business Media, 2013.

\bibitem{miretti2021team2}
L.~Miretti, E.~Björnson, and D.~Gesbert,
\newblock ``Team precoding towards scalable cell-free massive {MIMO}
  networks,''
\newblock {\em 2021 55th Asilomar Conference on Signals, Systems, and
  Computers}, 2021.

\bibitem{miretti2023duality}
L.~Miretti, R.~L.~G. Cavalcante, E.~Bj\"ornson, and S.~Sta{\'n}czak,
\newblock ``{UL-DL} duality for cell-free massive {MIMO} with per-{AP} power
  and information constraints,''
\newblock {\em IEEE Trans. Signal Processing}, 2024.

\bibitem{miretti2024delayed}
L.~Miretti and S.~Sta{\'n}czak,
\newblock ``Unlocking the potential of local {CSI} in cell-free networks with
  channel aging and fronthaul delays,''
\newblock {\em Proc. {IEEE} Int. Conf. Communications {(ICC)}}, 2024.

\bibitem{cavalcante2018spectral}
R.~L.~G. Cavalcante and S.~Sta{\'n}czak,
\newblock ``Spectral radii of asymptotic mappings and the convergence speed of
  the standard fixed point algorithm,''
\newblock in {\em 2018 IEEE International Conference on Acoustics, Speech and
  Signal Processing (ICASSP)}. IEEE, 2018, pp. 4509--4513.

\bibitem{Piotrowski2022}
T.~{Piotrowski} and R.~L.~G. {Cavalcante},
\newblock ``The fixed point iteration of positive concave mappings converges
  geometrically if a fixed point exists: Implications to wireless systems,''
\newblock {\em {IEEE} Trans. Signal Processing}, vol. 70, pp. 4697--4710, Sept.
  2022.

\bibitem{krause2015}
U.~Krause,
\newblock {\em Positive dynamical systems in discrete time: theory, models, and
  applications}, vol.~62,
\newblock Walter de Gruyter GmbH \& Co KG, 2015.

\bibitem{miretti2022vtc}
L.~Miretti, R.~L.~G. Cavalcante, S.~Sta{\'n}czak, M.~Schubert, R.~B\"onke, and
  X.~Wen,
\newblock ``Closed-form max-min power control for some cellular and cell-free
  massive mimo networks,''
\newblock in {\em 2022 IEEE 95th Vehicular Technology Conference:
  (VTC2022-Spring)}, 2022, pp. 1--7.

\bibitem{Lemmens2012}
B.~Lemmens and R.~Nussbaum,
\newblock {\em Nonlinear Perron-Frobenius Theory},
\newblock Cambridge Univ. Press, 2012.

\bibitem{cavalcante2023}
R.~L.~G. Cavalcante, L.~Miretti, and S.~Stańczak,
\newblock ``Characterization of the weak pareto boundary of resource allocation
  problems in wireless networks--implications to cell-less systems,''
\newblock in {\em Proc. {IEEE} Int. Conf. Communications {(ICC)}}, 2023.

\bibitem{fey2012}
H.~R. Feyzmahdavian, M.~Johansson, and T.~Charalambous,
\newblock ``Contractive interference functions and rates of convergence of
  distributed power control laws,''
\newblock {\em {IEEE} Trans. Wireless Commun.}, vol. 11, no. 12, pp.
  4494--4502, 2012.

\bibitem{boche2008}
H.~Boche and M.~Schubert,
\newblock ``Concave and convex interference functions -- general
  characterizations and applications,''
\newblock {\em {IEEE} Trans. Signal Processing}, vol. 56, no. 10, pp.
  4951--4965, Oct. 2008.

\bibitem{tan2014wireless}
C.~W. Tan,
\newblock ``Wireless network optimization by perron-frobenius theory,''
\newblock in {\em 2014 48th Annual Conference on Information Sciences and
  Systems (CISS)}. IEEE, 2014, pp. 1--6.

\bibitem{zheng2016}
L.~Zheng, Y.-W.~P. Hong, C.~W. Tan, C.-L. Hsieh, and C.-H. Lee,
\newblock ``Wireless max--min utility fairness with general monotonic
  constraints by {Perron--Frobenius} theory,''
\newblock {\em {IEEE} Trans. Inform. Theory}, vol. 62, no. 12, pp. 7283--7298,
  Dec. 2016.

\bibitem{tan2015wireless}
C.~W. Tan et~al.,
\newblock ``Wireless network optimization by {Perron-Frobenius} theory,''
\newblock {\em Foundations and Trends{\textregistered} in Networking}, vol. 9,
  no. 2-3, pp. 107--218, 2015.

\bibitem{krause1986perron}
U.~Krause,
\newblock ``Perron's stability theorem for non-linear mappings,''
\newblock {\em Journal of Mathematical Economics}, vol. 15, no. 3, pp.
  275--282, 1986.

\end{thebibliography}

\end{document}